\tikzstyle{thickline} = [line width=1.8pt]
\newcommand{\gettikzxy}[3]{%
  \tikz@scan@one@point\pgfutil@firstofone#1\relax
  \edef#2{\the\pgf@x}%
  \edef#3{\the\pgf@y}%
}
\newtheorem{lemma}{Lemma}
\newtheorem{observation}{Observation}
\newtheorem{proposition}{Proposition}
\newtheorem{theorem}{Theorem}
\theoremstyle{definition}
\newtheorem{definition}{Definition}
\newtheorem{example}{Example}
\crefname{table}{Table}{Tables}
\crefname{figure}{Figure}{Figures}
\crefname{theorem}{Theorem}{Theorems}
\crefname{definition}{Definition}{Definitions}
\crefname{corollary}{Corollary}{Corollaries}
\crefname{observation}{Observation}{Observations}
\crefname{lemma}{Lemma}{Lemmas}
\crefname{example}{Example}{Examples}
\crefname{reduction}{Reduction}{Reductions}
\crefname{construction}{Construction}{Constructions}
\crefname{subsection}{Subsection}{Subsections}
\crefname{section}{Section}{Sections}
\crefname{proposition}{Proposition}{Propositions}
\crefname{algorithm}{Algorithm}{Algorithms}
\crefname{claim}{Claim}{Claims}
\newcommand{\rz}{{\mathbb{R}}}
\DeclareMathOperator*{\argmin}{arg\,min}
\newcommand{\acon}{\mbox{$\alpha$}}
\newcommand{\ccon}{\mbox{$\gamma$}}
\newcommand{\dcon}{\mbox{$\delta$}}
\newcommand{\ppp}{{\cal P}}
\newcommand{\vvv}{{\cal V}}
\newcommand{\aaa}{{\cal A}}
\newcommand{\rrr}{{\cal R}}
\newcommand{\worstinconsistentconfig}{worst-diverse configuration\xspace}
\newcommand{\worst}{\mathsf{worst}}
\newcommand{\ode}{one-dimensional Euclidean\xspace}
\newcommand{\Ode}{One-dimensional Euclidean\xspace}
\newcommand{\oder}{one-dimensional Euclidean representation\xspace}
\newcommand{\Oder}{One-dimensional Euclidean representation\xspace}
\newcommand{\scp}{single-crossing property\xspace}
\newcommand{\spp}{single-peaked property\xspace}
\newcommand{\dEuclid}[1][]{{\ifthenelse{\equal{#1}{}}{$d$}{$#1$}-Euclidean}\xspace}
\newcommand{\inner}{\mathsf{inner}}
\newcommand{\pref}{\ensuremath{\succ}}
\begin{document}
\sloppy

\title{{\bf Small One-Dimensional Euclidean Preference Profiles}}
\author{Jiehua Chen \and Sven Grottke}

\maketitle

\begin{abstract}
  We characterize \ode preference profiles with a small number of alternatives and voters.  
  In particular, we show the following. 
  \begin{itemize}
  \item Every preference profile with up to two voters is \ode if and only if it is single-peaked.
  \item Every preference profile with up to five alternatives is \ode if and only if it is single-peaked and single-crossing.
  \end{itemize}
  By \cite{ChePruWoe2017}, we thus obtain that the smallest single-peaked and single-crossing preference profiles that are \emph{not} \ode consist of three voters and six alternatives.

  \medskip
  
  \noindent \textbf{JEL classification
    D81
    D72
  }
\end{abstract}

\section{Introduction}\label{sec:intro}

The \ode preference domain (also known as the unidimensional unfolding domain) is a spatial model of structured preferences which originates from economics~\cite{Hotelling1929,Downs1957}, political sciences~\cite{Stokes1963,BraJonKil2002,BogLas2007}, and psychology~\cite{Coombs1964,BorGroMai2018}.
In this domain, the alternatives and the voters are points in a one-dimensional space, \emph{i.e.}\ on the real line,
such that the preference of each voter towards an alternative decreases as the Euclidean distance between their points increases.

\Ode preferences are necessarily single-peaked~\cite{Black1948} and single-crossing~\cite{Roberts1977} as proven by~\citet{Coombs1964,DoiFal1994,ChePruWoe2017}.
The reverse, however, does not hold.
In his work, \citet{Coombs1964} provided a sample preference profile with 16 voters and 6 alternatives that is single-peaked and single-crossing, but \emph{not} \ode.
This counterexample appears to be quite large for real world scenarios.
For instance, in rank aggregation or winner determination elections, one typically either has few alternatives to begin with, or may consolidate first make a shortlist of only a few alternatives out of many, which will be considered for the final decision.
There are also settings where only a few voters are involved, as for instance in a hiring committee or when planning holidays for a family.  
Hence, a natural question arising in the context of \ode preferences is
whether for profiles with less than 16 voters or less than 6 alternatives, being single-peaked and single-crossing is sufficient to guarantee a \ode embedding. 
In other words, we are interested in the following question: Are there tight
upper bounds on the number of alternatives or voters such that profiles within
these bounds are \ode as long as they are single-peaked and single-crossing?
Recently, \citet{ChePruWoe2017} provided a single-peaked and single-crossing profile with three voters and six alternatives that is not \ode.
In this paper, we show that this counterexample is indeed minimal in terms of the number of voters and the number of alternatives.
In terms of the number of voters, we provide an algorithm that constructs a \ode embedding for any single-peaked preference profile with two voters (see \cref{alg:Euclid-embedding}).
As for the number of alternatives, we show via computer program that all single-peaked and single-crossing preferences with up to five alternatives are \ode (see \cref{thm:5-alts-sp+sc=euclid}).
We refer to the work of \citet{BreCheWoe2016} and the literature cited there for further discussion of the single-peaked and the single-crossing preference domains.

\paragraph{Paper outline.}
In \cref{sec:defi}, we introduce necessary definitions, including single-peaked and single-crossing preferences, and the \oder.
We also discuss some fundamental observations regarding these domain restrictions.
In \cref{sec:main-result-1}, we formulate our first main result in~\cref{thm:Euclidean-relationship,thm:two-votes-SP=Euclidean}.
We prove this result by providing an algorithm (see \cref{alg:Euclid-embedding}) that constructs a \ode embedding for any two preference orders which are single-peaked.
At the end of the section, we provide an example to illustrate \cref{alg:Euclid-embedding} (see \cref{ex:algorithm}).
In \cref{sec:main-result-2}, we provide our second main result by describing the computer program that finds all possible preference profiles with up to five alternatives that are both single-peaked and single-crossing, and uses the publicly available CPLEX solver to provide a \ode embedding for each of theser profiles (see \cref{thm:5-alts-sp+sc=euclid}). The code and the embeddings for all produced profiles are available from~\url{https://tubcloud.tu-berlin.de/s/rSNKkm8dtPkRKnE} and \url{https://tubcloud.tu-berlin.de/s/ArdQzFd8J6L5YFN}, respectively.

\section{Definitions and notations}
\label{sec:defi}

Let $\aaa\coloneqq \{1,\ldots,m\}$ be a set of alternatives.
A \emph{preference order}~$\pref$ over $\aaa$ is a linear order over $\aaa$; a linear order is a binary relation which is total, irreflexive, asymmetric, and transitive.
Given a preference order~$\pref$,
we use $\succeq$ to denote the binary relation which includes $\pref$ and preserves the reflexivity,
\emph{i.e.}~$\succeq \coloneqq \succ\!\cup \{(a,a)\mid a\in \aaa\}$.
An alternative~$c$ is \emph{the most preferred alternative in $\pref$}
if for each alternative~$b\in \aaa$ it holds that $a \succeq b$.
For two distinct alternatives~$a$ and $b$, the relation $a\pref b$ means that $a$ is strictly preferred to (or in other words, ranked higher than) $b$.
The notion~$\{a, b\}\pref_i c$ means that both $a$ and $b$ are strictly preferred to (or in other words, ranked higher than) $c$, but the preference relation between alternatives~$a$ and $b$ is arbitrary but unique.

A \emph{preference profile}~$\ppp$ specifies the preference orders of some voters over some alternatives.
Formally, $\ppp \coloneqq (\aaa, \vvv, \rrr \coloneqq (\pref_1, \ldots, \pref_n))$,
where $\aaa$ denotes the set of $m$ alternatives,
$\vvv$ denotes the set of $n$~voters,
and $\rrr$ is a collection of $n$ preference orders
such that each voter~$v_i\in \vvv$ ranks the alternatives according to the preference order~$\pref_i$ on $\aaa$.
We also assume that no two voters in a preference profile have the same preference orders.


\subsection{Single-peaked preferences}

The single-peaked property was introduced by \citet{Black1958} and has since been studied extensively.

\begin{definition}[\spp] \ \\
  A preference order~$\pref$ on a set~$\aaa$ of alternatives is
  \emph{single-peaked} with respect to a linear order~$\rhd$ of
  the alternatives if for its most preferred alternative~$a^*$ and for each two
  distinct alternatives~$b,c \in \aaa\setminus \{a^*\}$ it holds that
    \begin{align*}
  \text{ if } c \rhd b \rhd a^* \text{ or } a^*\rhd b \rhd c,  \text{then }  b \pref c.
  \end{align*}
  A preference profile with voter set~$\vvv$,
  is \emph{single-peaked} if there is a linear order~$\rhd$ of alternatives such that
  the preference order of each voter from $\vvv$ is single-peaked with respect to~$\rhd$.
\end{definition}

\noindent Slightly abusing the terminology, we say that two preference orders are \emph{single-peaked} if there is a linear order with respect to which each of these two preference orders is single-peaked.

The single-peaked property can be characterized by two forbidden subprofiles, \worstinconsistentconfig{}s and \acon-configurations~\cite{BaHa2011}.
The former is defined on three preference orders while the latter is defined on two preference orders.
By the \acon-configuration, for two arbitrary preference orders~$\pref_1$ and $\pref_2$,
we can observe the following.

\begin{lemma}\label{lem:two-sp-alpha}
  Two preference orders, denoted as $\pref_1$ and $\pref_2$, on the set~$\aaa$
  are single-peaked if and only if 
  for all four distinct alternatives~$x,y,z,w\in \aaa$ such that $x\pref_1 y \pref_1 z$ and $z\pref_2 y \pref_2 x$ 
  it holds that $y \pref_1 w$ or $y \pref_2 w$.
\end{lemma}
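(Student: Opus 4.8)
The plan is to prove the two implications of the equivalence separately, treating the explicit pattern appearing in the negation of the stated condition as an obstruction: four distinct alternatives $x,y,z,w$ with $x\pref_1 y\pref_1 z$, $z\pref_2 y\pref_2 x$, and $w\pref_1 y$ together with $w\pref_2 y$. This is exactly the two-voter \acon-configuration, so the lemma asserts that two orders are single-peaked precisely when they contain no such pattern.

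For the forward direction (single-peaked $\Rightarrow$ condition), I would suppose that $\pref_1$ and $\pref_2$ are single-peaked with respect to a common axis $\rhd$ and derive a contradiction from the existence of $x,y,z,w$ as above. First I would record the elementary fact that in a single-peaked order the $\rhd$-middle of any three alternatives is never ranked last; this follows directly from the definition by casing on the position of the peak relative to the middle alternative. Applied to the triple $\{x,y,z\}$ for each voter, where $z$ is the worst of the three for $\pref_1$ and $x$ is the worst for $\pref_2$, this forces $y$ to be the $\rhd$-middle of $x,y,z$; without loss of generality $x\rhd y\rhd z$. Next, from $x\pref_1 y$ I would locate voter~$1$'s peak $a^*$ strictly on the $x$-side of $y$ (otherwise $x\rhd y\rhd a^*$ would give $y\pref_1 x$ by the definition), and then $w\pref_1 y$ forces $w\rhd y$ (otherwise $a^*\rhd y\rhd w$ would give $y\pref_1 w$). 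The symmetric argument for voter~$2$, using $z\pref_2 y$ and $w\pref_2 y$, places its peak strictly on the $z$-side of $y$ and forces $y\rhd w$. The two conclusions $w\rhd y$ and $y\rhd w$ contradict each other, so no such $w$ exists.

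The backward direction (condition $\Rightarrow$ single-peaked) is the part I expect to be the main obstacle, as it requires exhibiting a common axis. Here I would invoke the forbidden-subprofile characterization of single-peakedness of \cite{BaHa2011}: a profile is single-peaked if and only if it contains neither a \worstinconsistentconfig{} nor an \acon-configuration. A \worstinconsistentconfig{} is defined on three distinct voters and hence cannot occur in a two-voter profile, so for $\pref_1,\pref_2$ single-peakedness is equivalent to the absence of an \acon-configuration alone. It then remains to verify that the \acon-configuration, specialised to two voters, is exactly the pattern ruled out by the stated condition; this identification of the two descriptions is the one delicate bookkeeping step, and it closes the argument.

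As a self-contained alternative for the backward direction, I would instead construct the axis by peeling alternatives from both ends inward: among the not-yet-placed alternatives, the one ranked last by $\pref_1$ and the one ranked last by $\pref_2$ must occupy the two free ends, and one recurses on the remainder. The content of the no-\acon-configuration hypothesis is precisely what guarantees that this greedy placement never sandwiches a middle alternative that is dominated by both voters inside a reversal, and an induction on the number of alternatives would certify that both orders are single-peaked with respect to the constructed axis. I expect the technically heaviest part to be verifying that the two last-ranked alternatives can always be consistently assigned to the ends (including the degenerate case where they coincide), which is exactly why deferring to the characterization of \cite{BaHa2011} is the cleaner route.
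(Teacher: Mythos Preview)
Your proposal is correct and matches the paper's own treatment: the paper does not give an explicit proof but simply records the lemma as a consequence of the Ballester--Haeringer characterization~\cite{BaHa2011}, noting that \worstinconsistentconfig{}s require three voters so that, with two voters, single-peakedness is equivalent to the absence of an \acon-configuration---exactly your primary route for the backward direction. Your self-contained forward argument (locating $y$ as the $\rhd$-middle of $\{x,y,z\}$ and deriving the contradictory placements of $w$) is more detail than the paper provides, but it is sound and entirely in the spirit of what the citation covers.
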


\subsection{Single-crossing property}

Single-crossing profiles date back to the seventies, when
\citet{Mirrlees1971} and \citet{Roberts1977} observed that voters
voting on income taxation may form a linear order such that between each
two tax rates, the voters along the order either all agree on the relative
positions of both rates, or there is one spot where the voters switch from
preferring one rate to preferring the other rate.

\begin{definition}[\scp]\label{def:sc}\ \\ 
 A linear order of voters is \emph{single-crossing with respect to
    a pair~$\{a,b\}$ of alternatives}, 
  if there is at most one voter in this order such that 
  all voters ordered ahead of this voter strictly prefer~$a$ to $b$, and all
  voters \emph{not} ordered ahead of this voters strictly prefer~$b$ to $a$.

  A linear order of voters is a \emph{single-crossing} order, 
  if it is single-crossing
  with respect to every possible pair of alternatives. 
  A preference profile is
  \emph{single-crossing} if it allows a single-crossing order of the voters.
\end{definition}

The single-crossing property can be characterized by two forbidden
subprofiles, \ccon-configurations and \dcon-configurations~\cite{BreCheWoe2013a}.

\subsection{\Oder}

\begin{definition}[\oder]\label{def:euclid}\ \\
  Let $\ppp \coloneqq (\aaa, \vvv\coloneqq\{v_1, \ldots,  v_n\}, \rrr\coloneqq(\pref_1, \ldots, \pref_n))$ be a preference profile.   
  Let~$E\colon \aaa \cup V \to \rz$ be an embedding of the alternatives and the voters into the real line where each two distinct alternatives~$a, b\in \aaa$ have different values, that is,
  $E(a)\neq E(b)$.
  A voter~$v_i \in V$ 
  is \emph{\ode} with respect to~$E$
  if for each two distinct alternatives~$a, b \in \aaa$
  voter~$v_i$ strictly prefers the one closer to him,
  that is,
  \begin{align*}
    \text{if } a \pref_i  b \text{, then } |E(a) - E(v_i)| < |E(b) - E(v_i)|.
  \end{align*}
  
  An embedding~$E$ of the alternatives and voters is a \emph{\oder} of profile~$\ppp$
  if each voter in~$V$ is
  \emph{\ode} with respect to $E$.
  
  A profile is \emph{\ode} if it has a \oder.
\end{definition}

The \ode definition implies the following.

\begin{observation}[\cite{ChePruWoe2017}]
  A voter~$v_i\in \vvv$ is \ode with respect to an embedding $E$ of the alternatives and voters if and only if 
  for each two distinct alternatives~$a$ and $b$ with $a\pref_i b$ it holds that
  if $E(a) < E(b)$ then $E(v_i) < \frac{1}{2}(E(a)+ E(b))$;
  otherwise $E(v_i) > \frac{1}{2}(E(a)+ E(b))$.
\end{observation}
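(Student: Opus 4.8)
The plan is to verify the claimed equivalence directly, reducing the Euclidean distance comparison of \cref{def:euclid} to a comparison against the midpoint. Since the defining condition of being \ode quantifies over all ordered pairs~$a \pref_i b$ independently, it suffices to show that for a single such pair the distance inequality $|E(a) - E(v_i)| < |E(b) - E(v_i)|$ is logically equivalent to the stated midpoint condition; the full equivalence then follows by conjoining over all pairs.

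First I would abbreviate $x \coloneqq E(v_i)$, $p \coloneqq E(a)$, and $q \coloneqq E(b)$, noting $p \neq q$ since distinct alternatives receive distinct values under~$E$. Because both sides of $|p - x| < |q - x|$ are nonnegative, squaring is an equivalence, so the condition is the same as $(p - x)^2 < (q - x)^2$. Expanding both squares and cancelling the common $x^2$ term yields $p^2 - 2px < q^2 - 2qx$, which rearranges to $(p - q)(p + q) < 2x(p - q)$.

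The only step requiring care is the division by the factor $p - q$, whose sign determines the direction of the resulting inequality; this is precisely where the two cases of the observation arise. If $E(a) < E(b)$, then $p - q < 0$, and dividing flips the inequality to $p + q > 2x$, that is, $E(v_i) < \frac{1}{2}(E(a) + E(b))$. If instead $E(a) > E(b)$, then $p - q > 0$, and the inequality is preserved, giving $E(v_i) > \frac{1}{2}(E(a) + E(b))$. Since every transformation above is an equivalence, both directions of the claim are established simultaneously, and there is no genuine obstacle beyond correctly tracking the sign of $p - q$ in the case split.
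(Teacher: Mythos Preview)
Your argument is correct: the algebraic manipulation is sound, the squaring step is justified by nonnegativity, and the case split on the sign of $p-q$ is handled properly. Note that the paper itself offers no proof of this observation---it is stated as an immediate consequence of \cref{def:euclid} and attributed to \cite{ChePruWoe2017}---so your elementary midpoint computation is exactly the kind of verification the reader is expected to supply.
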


The following observation regarding the relation between single-peaked and single-crossing profiles and the \ode representation is also known from the literature~\cite{Coombs1964,DoiFal1994,ChePruWoe2017}.

\begin{observation}\label{obs:oder_is_sp}
  If a profile is \ode, then it is also single-peaked and single-crossing.
\end{observation}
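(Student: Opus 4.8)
The plan is to extract from a given \oder~$E$ of the profile two explicit witnessing orders, one for each structural property. For single-peakedness I would use the order~$\rhd$ of the alternatives induced by their positions on the line, declaring $a \rhd b$ whenever $E(a) < E(b)$; this is a genuine linear order because distinct alternatives receive distinct values by \cref{def:euclid}. For single-crossingness I would order the voters by their positions~$E(v_i)$. This is well defined: if two voters were placed at the same point, then in every pair of alternatives they would prefer the same (closer) one and hence share a preference order, which the profile forbids; so all voter positions are distinct.

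To verify that an arbitrary voter~$v_i$ is single-peaked with respect to~$\rhd$, let $a^*$ be the alternative closest to $E(v_i)$, which is its most preferred one, and take distinct $b,c$ with $c \rhd b \rhd a^*$, so that $E(c) < E(b) < E(a^*)$ and $b$ lies strictly between $c$ and $a^*$ on the line. The key step is to show $E(v_i) > E(b)$: otherwise $E(v_i) \le E(b) < E(a^*)$ would place $b$ strictly between $v_i$ and $a^*$, making $b$ strictly closer to $v_i$ than $a^*$ and contradicting the choice of $a^*$. Once $E(v_i) > E(b) > E(c)$ is established, the distances satisfy $|E(b)-E(v_i)| = E(v_i)-E(b) < E(v_i)-E(c) = |E(c)-E(v_i)|$, whence $b \pref_i c$, as required. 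The symmetric case $a^* \rhd b \rhd c$ is handled identically with the two sides of the voter interchanged.

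To verify that the position order of the voters is single-crossing, I would fix a pair~$\{a,b\}$ of alternatives with $E(a) < E(b)$ and let $\mu \coloneqq \tfrac{1}{2}(E(a)+E(b))$ be its midpoint. By the midpoint characterization of \ode voters noted just above, a voter prefers $a$ to $b$ exactly when $E(v_i) < \mu$ and prefers $b$ to $a$ exactly when $E(v_i) > \mu$, while no voter can land on $\mu$ (that would make $a$ and $b$ equidistant and violate strictness). Hence, scanning the voters in order of increasing position, all voters before the threshold~$\mu$ prefer $a$ and all those after prefer $b$, so the crossing with respect to~$\{a,b\}$ occurs at most once. Since $\{a,b\}$ was arbitrary, the position order is single-crossing, matching \cref{def:sc}.

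The only genuinely delicate point is the sidedness argument in the single-peaked step, namely deducing $E(v_i) > E(b)$ from the minimality of the distance to $a^*$; everything else reduces to comparing positions with midpoints and is routine.
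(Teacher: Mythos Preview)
Your proof is correct and follows exactly the approach the paper takes: use the embedding~$E$ to induce a linear order on the alternatives (for single-peakedness) and on the voters (for single-crossingness). The paper's own proof merely asserts this is ``straight-forward'' without spelling out the verification, whereas you supply the details, including the sidedness argument for single-peakedness and the midpoint argument for single-crossingness; both are sound.
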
 

\begin{proof}
  It is straight-forward to see that
  if there is a \oder~$E$ of a given profile,
  then this profile is single-peaked with respect to the order induced by ordering the alternatives according to their values in $E$.
  Moreover, it is single-crossing with respect to the order induced by ordering the voters according to their values in $E$.
\end{proof}

\section{Single-peaked profiles with two voters are \ode}
\label{sec:main-result-1}

In this section, we formulate and prove our first main result.

\begin{theorem}\label{thm:Euclidean-relationship}
 Given a profile with two voters~$v_1$ and $v_2$ whose preference orders~$\pref_1$ and $\pref_2$ are single-peaked,
 \cref{alg:Euclid-embedding} returns a \oder of this profile.
\end{theorem}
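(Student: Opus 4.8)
The plan is to prove correctness of \cref{alg:Euclid-embedding} by checking that the embedding it outputs satisfies the midpoint reformulation of \ode-ness stated in the observation following \cref{def:euclid}: voter~$v_i$ is \ode with respect to~$E$ exactly when, for every pair $a \pref_i b$, the point $E(v_i)$ lies strictly on the $a$-side of the midpoint $\tfrac12(E(a)+E(b))$. Since a profile with only two distinct voters is automatically single-crossing (ordering them as $v_1,v_2$ gives at most one switch per pair), \cref{obs:oder_is_sp} extracts no extra hypothesis, and the entire difficulty is the geometric realization. The first point I would stress is that the left-to-right order in which the algorithm places the alternatives need \emph{not} coincide with the ``obvious'' single-peaked axis~$\rhd$: two orders can be single-peaked with respect to a common $\rhd$ yet fail to be \ode if one insists on embedding the alternatives in the order $\rhd$, while a \emph{different} common axis succeeds. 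Accordingly the proof splits into (i)~justifying the order in which \textsf{Embed} arranges the alternatives on the line, and (ii)~showing that \textsf{Refine}, \textsf{Fallback}, and \textsf{VoterPositionSearch} can always pick coordinates and voter positions realizing every preference.

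For the combinatorial core I would first record the structure of two single-peaked orders. Writing $a^*_1,a^*_2$ for the two peaks, each $\pref_i$ is, on its peaked axis, a merge of the two branches emanating from $a^*_i$, so for every pair of alternatives I can read off whether $v_1$ and $v_2$ agree or disagree. For a placed order with $a$ to the left of $b$, a disagreement on $\{a,b\}$ forces one voter strictly to the right of $\tfrac12(E(a)+E(b))$ and the other strictly to the left; the feasible set for each voter is thus an interval cut out by these half-lines, and a valid embedding exists precisely when, after fixing the alternatives' coordinates, both intervals are simultaneously non-empty. The target, which I would prove \textsf{Embed}'s ordering achieves, is that all disagreement half-lines point $v_1$ to one side and $v_2$ to the other; then widening a single gap opens both intervals at once, and \textsf{VoterPositionSearch} selects $E(v_1),E(v_2)$ inside them.

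The heart of the argument --- and the step I expect to be the main obstacle --- is exactly where this consistency can fail: a pair of ``crossed'' disagreements $\{a,b\}$ and $\{c,d\}$ for which $v_1$ wants its position above one midpoint and below the other while $v_2$ wants the reverse, forcing two incompatible orderings of the \emph{same} two midpoints (this is precisely the obstruction that kills the naive axis). Here I would invoke \cref{lem:two-sp-alpha}: I claim that if such an irreparable crossing survived every admissible placement, one could exhibit four distinct alternatives $x,y,z,w$ with $x \pref_1 y \pref_1 z$, $z \pref_2 y \pref_2 x$, and both $w \pref_1 y$ and $w \pref_2 y$, i.e.\ an $\alpha$-configuration, contradicting single-peakedness. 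Thus single-peakedness is exactly what guarantees that each crossing can be removed by reordering. I would make this rigorous by assigning a potential --- for instance the number of violated midpoint constraints, or the number of surviving crossings --- and proving that each \textsf{Refine} step strictly decreases it while preserving all previously satisfied constraints, so the loop terminates; \textsf{Fallback} then supplies explicit coordinates in the unique residual configuration, whose feasibility is again certified by the absence of an $\alpha$-configuration. Finally I would close the loop by verifying that the produced coordinates satisfy the midpoint characterization for both voters, so that $E$ is a \oder, completing the proof.
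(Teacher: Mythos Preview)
Your proposal does not prove the theorem, because it is not about \cref{alg:Euclid-embedding} as it is actually written. You have misread the algorithm in at least three structural ways. First, there is no subroutine \textsf{VoterPositionSearch}: the voters are placed once and for all during initialization (Lines~\ref{alg:v2-pos}--\ref{alg:ini-end}) at $E(v_2)=0$ and $E(v_1)=|\inner(\pref_1,\pref_2)|+1$, and never moved thereafter. Second, \textsf{Refine} is not a reordering step that decreases a ``crossing potential'': it takes the first block $c_j,\dots,c_{i-1}$ of not-yet-embedded alternatives in the order~$\pref$ that is sandwiched between already-embedded alternatives $c_{j-1}$ and $c_i$, and assigns them explicit coordinates interpolating between $\dist(j-1)$ and $\dist(i)$ (Lines~\ref{alg:i-exists-v1}--\ref{alg:i-exists-v2}). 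Third, \textsf{Fallback} is not a terminal coordinate-assignment step; it simply embeds one more alternative to the right of $v_1$ whenever both \textsf{Refine} calls fail in a given iteration, guaranteeing progress.

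Consequently, the proof architecture you sketch---first pin down an axis order, then widen gaps, then search for voter positions---does not correspond to anything the algorithm does, and your potential-function termination argument is about the wrong loop. The paper's proof is instead an \emph{invariant} argument: after initialization the partial embedding is a \oder of the inner alternatives (\cref{prop:alg-ini}), and each iteration extends it while preserving this invariant. The non-trivial content is showing that alternatives placed to the right of $v_1$ by \textsf{Refine}$(\pref_1,v_1)$ are also correctly ordered for $v_2$ (and symmetrically). This hinges on \cref{lem:only-one-case-applies} and \cref{lem:only-one-case-applies-2}, which use \cref{lem:two-sp-alpha} to show that any not-yet-embedded alternative ranked above $\worst(D,\pref_1)$ by $v_1$ must be ranked below $\worst(D,\pref_2)$ by $v_2$, with the relative order preserved; together with a triangle-inequality computation this yields the needed distance comparison for $v_2$. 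Your appeal to $\alpha$-configurations is in the right spirit, but it is deployed at the wrong place and for the wrong purpose: it must control the \emph{other} voter's preferences on the newly embedded block, not resolve a global ``crossing'' obstruction.
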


We can conclude the following from \cref{thm:Euclidean-relationship}.

\begin{theorem}\label{thm:two-votes-SP=Euclidean}
  A profile~$\mathcal{P}$ with two voters is one-dimensional Euclidean if and only if it is single-peaked.
\end{theorem}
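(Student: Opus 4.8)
The plan is to derive \cref{thm:two-votes-SP=Euclidean} directly from \cref{thm:Euclidean-relationship} together with \cref{obs:oder_is_sp}, proving the two directions of the ``if and only if'' separately. The forward direction (Euclidean implies single-peaked) is immediate: \cref{obs:oder_is_sp} already states that any \ode profile is single-peaked (and single-crossing), so in particular a two-voter \ode profile is single-peaked. This requires no further work beyond citing the observation.

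For the reverse direction (single-peaked implies Euclidean), suppose $\ppp$ is a profile with two voters $v_1$ and $v_2$ whose preference orders $\pref_1$ and $\pref_2$ are single-peaked. By the definition of single-peakedness for a profile, there is a single linear order $\rhd$ of the alternatives with respect to which both $\pref_1$ and $\pref_2$ are single-peaked; equivalently, by the terminology convention stated after the definition, the two preference orders are single-peaked in the two-order sense. I would then invoke \cref{thm:Euclidean-relationship}, which guarantees that \cref{alg:Euclid-embedding} returns a \oder of this profile. The existence of such an embedding is exactly what it means for $\ppp$ to be \ode, so this direction is complete.

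One subtlety worth addressing explicitly is the boundary case where the profile has fewer than two distinct voters. Since the definitions assume no two voters share a preference order, a ``two-voter'' profile genuinely has two distinct orders; but I should also note that a profile with a single voter (or the empty profile) is trivially \ode and trivially single-peaked, so the statement holds vacuously or trivially in those cases. This ensures the phrase ``with two voters'' is handled uniformly whether or not one reads it as ``at most two.''

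The main point to get right is not any hard calculation but the logical packaging: \cref{thm:two-votes-SP=Euclidean} is a genuine corollary, with the nontrivial content entirely front-loaded into \cref{thm:Euclidean-relationship} and its constructive proof via \cref{alg:Euclid-embedding}. Thus the expected obstacle is purely expository---making sure both directions are cleanly attributed (forward to \cref{obs:oder_is_sp}, reverse to \cref{thm:Euclidean-relationship}) and that the single-peakedness of the two-voter profile is correctly translated into the hypothesis required by \cref{thm:Euclidean-relationship}.
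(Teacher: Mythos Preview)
Your proposal is correct and matches the paper's own proof essentially verbatim: the paper also dispatches the ``only if'' direction by citing \cref{obs:oder_is_sp} and the ``if'' direction by citing \cref{thm:Euclidean-relationship}. Your additional remarks on the single-voter boundary case and on translating the profile-level single-peakedness into the two-order hypothesis of \cref{thm:Euclidean-relationship} are harmless elaborations that the paper leaves implicit.
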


\begin{proof}
  The ``only if'' part follows from \cref{obs:oder_is_sp} and
  the ``if'' part follows from \cref{thm:Euclidean-relationship}.
\end{proof}

In the remainder of this section, we show the correctness of \cref{thm:Euclidean-relationship}. 

\subsection{\cref{alg:Euclid-embedding} and some technical results}\label{subsec:alg-technicals}

     \newcommand{\lleft}{\textsf{left}}
     \newcommand{\rright}{\textsf{right}}
     \newcommand{\dist}{\textsf{dist}}
\begin{algorithm}[t!]
     \DontPrintSemicolon
     \footnotesize
   \SetKwInput{KwInput}{Input}
   \SetKwInput{KwOutput}{Output}
   \SetKwBlock{Block}
   \SetAlCapFnt{\footnotesize}
   \KwInput{\\
     $\hspace{3pt}$ $a_1 \pref_1 a_2 \pref_1 \ldots \pref_1 a_m$ --- voter~$v_1$'s preference order
     
     $\hspace{3pt}$ $b_1 \pref_2 b_2 \pref_2 \ldots \pref_2 b_m$ --- voter~$v_2$'s preference order
   }
   \KwOutput{Embeddings~$E\colon \{1,2,\dots, m\} \cup \{v_1,v_2\} \to \mathds{R}$}
   \vspace{3mm}	

   \Embed{$\pref_1$, $\pref_2$}:
   \Block{
     
     \tcc{Initialize the embedding by starting with the voters and the `inner' alternatives.}
     
     Let $p\leftarrow 1$\label{alg:ini-start}
     
     \For{$i=1,2\dots,m$}{\label{alg:inner-start}
       \If{$b_i \succeq_1 b_1 \wedge b_i \succeq_2 a_1$}{
         $E(b_i) = p$
         
         $p \leftarrow p+1$
       }
     }\label{alg:inner-end}

     $E(v_2) = 0$ \label{alg:v2-pos}
     
     $E(v_1) = p$ \label{alg:ini-end}
     
     \Repeat{$E(c)$ defined for all alternative~$c$}{
       \label{alg:loop-start}
       
       $s_1 \leftarrow $ Refine($\pref_1$,$v_1$)\label{alg:refine-v1}

       $s_2 \leftarrow $ Refine($\pref_2$,$v_2$)\label{alg:refine-v2}

       \If{$s_1= \mathsf{false} \wedge s_2= \mathsf{false}$}{
         Fallback() \label{alg:fallback}
       }
     }\label{alg:loop-end}
   }

   \tcc{Refine positions for preference order $\pref$}

  \Refine{$\pref\colon c_1 \pref \cdots \pref c_m$, $v\in \{v_1, v_2\}$}:
  \Block{
   \label{alg:refine-start}

      $j \leftarrow \argmin\{x\in \{1,2,\dots,m\}\mid E(c_x)=\text{undefined}\}$
     
      $i \leftarrow  \argmin\{x\in \{j+1,j+2,\dots,m\}\mid E(c_x)=\text{defined}\}$
     
      \If{$i$ exists}{%
        \label{alg:i-exists-start}
        $\dist({j-1}) \leftarrow |E(v)-E(c_{j-1})|$ \label{alg:dist-j-1}

        $\dist({i}) \leftarrow |E(v)-E(c_{i})|$\label{alg:dist-i}

        \For{$k=j,j+1,\dots,i-1$\label{alg:for-j-i-1}}
        {
          \If{$v=v_1$}{
            $E(c_k)\leftarrow
            E(v_1) + \dist({j-1}) + 
            \frac{\dist({i})-\dist({j-1})}{i-j+1}\cdot (k-j+1)$\label{alg:i-exists-v1}
          }
         \Else{%
           $E(c_k)\leftarrow E(v_2) - \dist({j-1}) - \frac{\dist({i})-\dist({j-1})}{i-j+1}\cdot (k-j+1)$\label{alg:i-exists-v2}

         }
       }
     
       \Return{$\mathsf{true}$}\label{alg:i-exists-end}
     }
     \Else{
       \Return{$\mathsf{false}$}\label{alg:refine-end}
     }
   }

   \tcc{Refine fall back}
   \Fallback{}:
   \Block{
     $j \leftarrow \argmin\{x\in \{1,2,\dots,m\}\mid E(a_x)=\text{undefined}\}$
     \label{alg:refine-ff}%
     
     \If{$j$ exists}{%
       \label{alg:ff-j-start}
       $E(a_j)\leftarrow E(v_1) + |E(v_1) - E(a_{j-1})| + 1 $ \label{alg:ff-embed}
     }\label{alg:ff-j-end}
   }
   
   \caption{Algorithm for computing a \ode embedding for two preference orders, one with $a_1\pref_1 a_2 \pref_1 \dots \pref_1 a_m$ and the other $b_1 \pref_2 b_2 \pref_2 \dots \pref_2 b_m$.}
   \label{alg:Euclid-embedding}
\end{algorithm}

The general idea behind the algorithm in \cref{thm:Euclidean-relationship} is to first embed all \emph{inner} alternatives that are ranked by both voters~$v_1$ and $v_2$ between $a_1$ and $b_1$,
and embed voter~$v_1$ (resp.\ $v_2$)  to the left of alternative~$a_1$ (resp.\ to right of alternative~$b_1$). 
Then, the algorithm extends the embedding by successively
embedding some appropriately selected alternatives to the left or to the right of the already embedded alternatives so as to obtain an extended embedding
with respect to which one of the voters is \ode.
The single-peaked property, according to \cref{lem:two-sp-alpha}, guarantees that the extension by these alternatives remains a \oder for the other voter.

We introduce the following notion.

\begin{definition}[Inner alternatives]
  Let~$\pref_1$ and $\pref_2$ be two preference orders, and let $a_1$ and $b_1$ be the most preferred alternatives of $\pref_1$ and $\pref_2$, respectively.
  The \emph{set of inner alternatives of $\pref_1$ and $\pref_2$},
  denoted as $\inner(\pref_1, \pref_2)$,
  is the set of all alternatives that are ranked between $a_1$ and $b_1$ by both $\pref_1$ and $\pref_2$:
  \begin{align*}
    \inner(\pref_1,\pref_2)\coloneqq \{c\mid c \succeq_1 b_1 \wedge c \succeq_2 a_1\} 
    \text{.}
  \end{align*}
\end{definition}

\begin{example}
  Consider two preference orders~$\pref_1$ and $\pref_2$ with 
  $1\pref_1 2 \pref_1 3 \pref_1 4$ and $3 \pref_2 4 \pref_2 1 \pref_2 2$.
  The set of inner alternatives by $\pref_1$ and $\pref_2$ is $\inner(\pref_1, \pref_2)=\{1,3\}$.
\end{example}

We observe the following properties concerning the inner alternatives of two single-peaked preferences.
\begin{lemma}\label{lem:inner}
  Consider two preference orders~$\pref_1$ and $\pref_2$.
  \begin{enumerate}[(1)]
    \item\label{lem:inner-first-ranked}
    For each~$r\in \{1,2\}$, the most preferred alternative of $\pref_r$ belongs to $\inner(\pref_1,\pref_2)$.
    \item\label{lem:inner-transitive} For each two distinct inner alternatives~$x,y \in \inner(\pref_1,\pref_2)$ and for an arbitrary alternative~$z$ distinct from $x$ and $y$ it holds that if $z\pref_1 x$ and $z\pref_2 x$,
    then $z\in \inner(\pref_1,\pref_2)$.
    \item\label{lem:inner-reverse}  If $\pref_1$ and $\pref_2$ are single-peaked, then 
     for each two distinct inner alternatives~$x, y\in \inner(\pref_1,\pref_2)$ 
    it holds that $x\succ_1 y$ if and only if $y \succ_2 x$.
  \end{enumerate}
\end{lemma}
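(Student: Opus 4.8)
Parts~\ref{lem:inner-first-ranked} and~\ref{lem:inner-transitive} are immediate from the definitions and need no single-peakedness. Writing $a_1,b_1$ for the top alternatives of $\pref_1,\pref_2$, membership $c\in\inner(\pref_1,\pref_2)$ unfolds to $c\succeq_1 b_1$ and $c\succeq_2 a_1$. For part~\ref{lem:inner-first-ranked} I would just check that $a_1$ satisfies both conditions: $a_1\succeq_1 b_1$ since $a_1$ is the most preferred alternative of $\pref_1$, and $a_1\succeq_2 a_1$ by reflexivity; the claim for $b_1$ is symmetric. For part~\ref{lem:inner-transitive} I would use transitivity alone: from $x\in\inner(\pref_1,\pref_2)$ we have $x\succeq_1 b_1$ and $x\succeq_2 a_1$, so the hypotheses $z\succ_1 x$ and $z\succ_2 x$ give $z\succeq_1 b_1$ and $z\succeq_2 a_1$, that is, $z\in\inner(\pref_1,\pref_2)$. (The auxiliary alternative $y$ is not actually used here.)

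The substance is part~\ref{lem:inner-reverse}, where I would work with a common single-peaked axis. Let $\rhd$ be a linear order with respect to which both $\pref_1$ and $\pref_2$ are single-peaked, and speak of positions along $\rhd$ as lying to the \emph{left} or \emph{right} of one another. If $a_1=b_1$, then $c\succeq_1 a_1$ forces $c=a_1$, so $\inner(\pref_1,\pref_2)=\{a_1\}$ contains no two distinct alternatives and the statement is vacuous. Hence I may assume $a_1\neq b_1$ and, without loss of generality, that $a_1$ lies to the left of $b_1$ on $\rhd$; swapping the two voters together with exchanging $x$ and $y$ leaves the asserted equivalence unchanged, so this orientation costs nothing.

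The key step — and the one place where single-peakedness is essential — is the claim that every inner alternative lies on $\rhd$ weakly between $a_1$ and $b_1$. I would prove this by ruling out both outside regions. An alternative $c$ strictly to the left of $a_1$ lies strictly further from the peak $b_1$ of $\pref_2$ than $a_1$ does (both being to the left of $b_1$), so single-peakedness of $\pref_2$ forces $a_1\succ_2 c$, contradicting $c\succeq_2 a_1$; symmetrically, an alternative strictly to the right of $b_1$ is dominated by $b_1$ in $\pref_1$, contradicting $c\succeq_1 b_1$. This confinement is what makes the rest of the argument purely monotone, and I expect it to be the main obstacle to get exactly right.

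With all inner alternatives trapped between the two peaks, the equivalence follows from monotonicity on each side of a peak. For distinct inner $x,y$, both lie weakly to the right of the peak $a_1$ of $\pref_1$, so $x\succ_1 y$ holds precisely when $x$ is closer to $a_1$, i.e.\ strictly to the left of $y$; and both lie weakly to the left of the peak $b_1$ of $\pref_2$, so $y\succ_2 x$ holds precisely when $y$ is closer to $b_1$, again exactly when $x$ is strictly to the left of $y$ (distinct alternatives occupy distinct positions on $\rhd$, so the comparison is always strict). Chaining the two equivalences gives $x\succ_1 y\iff y\succ_2 x$. The only bookkeeping to watch is when $x$ or $y$ is itself a peak, which is absorbed by reading ``closer to the peak'' so as to include the peak itself.
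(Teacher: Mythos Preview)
Your proof is correct. Parts~(\ref{lem:inner-first-ranked}) and~(\ref{lem:inner-transitive}) match the paper's argument almost verbatim, including your observation that the second inner alternative~$y$ plays no role in part~(\ref{lem:inner-transitive}). For part~(\ref{lem:inner-reverse}), however, you take a genuinely different route. The paper argues by contradiction via \cref{lem:two-sp-alpha}: assuming $x\succ_1 y$ and $x\succ_2 y$ for two distinct inner alternatives, it observes that then $x,y\notin\{a_1,b_1\}$ and deduces $a_1\succ_1 x\succ_1 y\succ_1 b_1$ together with $b_1\succ_2 x\succ_2 y\succ_2 a_1$, which is precisely a forbidden $\alpha$-configuration. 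You instead work directly on a common single-peaked axis~$\rhd$, first confining every inner alternative to the segment between the two peaks and then reading off the equivalence from the monotonicity of each preference order on one side of its peak. Your argument is self-contained in that it does not invoke \cref{lem:two-sp-alpha}, and it makes the geometric content of the lemma explicit --- the inner alternatives are exactly those sitting between the two peaks on the axis, where the two voters' preferences run in opposite directions. The paper's proof is a bit shorter and keeps everything within its forbidden-subprofile framework, which it reuses later anyway.
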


\begin{proof}
  The first statement follows from the definition of $\inner$.

  As to the second statement, 
  since $x\in \inner(\pref_1,\pref_2)$,
  if $z\succ_1 x$ and $z\succ_2 x$, then
  by the transitivity of preference orders,
  it follows that $z\succeq_1 b_1$ and $z\succeq_2 a_1$.
  By the definition of $\inner$, we immediately have that $z\in \inner(\pref_1,\pref_2)$.

  It remains to show the last statement.
  If $a_1=b_1$, then by the definition of $\inner$
  it holds that $\inner(\succ_1,\succ_2)=\{a_1\}=\{b_1\}$, and the second statement holds immediately since $\inner(\succ_1,\succ_2)$ has only one alternative.
  Thus, let us assume that $a_1 \neq b_1$ so that $|\inner(\pref_1,\pref_2)|\ge 2$.
  Let $\pref_1$ and $\pref_2$ be single-peaked.
  Suppose, for the sake of contradiction, that there are two distinct alternatives~$x,y\in \inner(\pref_1,\pref_2)$ with $x\succ_1 y$ and $x \succ_2 y$---the case with $y \succ_1 x$ and $y \succ_2 x$ works analogously.
  Since $a_1\pref_1 b_1$ and $b_1 \pref_2 a_1$, it follows that $x,y \notin \{a_1,b_1\}$.
  By the definition of $a_1$ and $b_1$ and since $x,y\in \inner(\succ_1,\succ_2)$,
  this implies that $a_1 \pref_1 x \pref_1 y \pref_1 b_1$ and $b_1 \pref_2 x \pref_2 y \pref_2 a_1$---a contradiction to \cref{lem:two-sp-alpha}.
\end{proof}

Now, we are ready to describe \cref{alg:Euclid-embedding}.
It consists of two parts, the initialization and the main loop.
In the initialization, 
we embed all inner alternatives (see Lines \ref{alg:ini-start}--\ref{alg:inner-end})
in such a way that the left-to-right (resp.\ the right-to-left) order corresponds to the preferences of $v_2$ (resp.\ $v_1$).
By \cref{lem:inner}~(\ref{lem:inner-reverse}), it follows that the most preferred alternative of $v_1$, denoted as $a_1$, is the right-most alternative, while
the most preferred alternative of $v_2$, denoted as $b_1$, is the left-most alternative.
Then, in Lines \ref{alg:v2-pos}--\ref{alg:ini-end} we embed voter~$v_2$ (resp.\ $v_1$) to the left of $a_1$ (resp.\ to the right of $b_1$).
Summarizing, we observe the following about the initialization step.

\begin{proposition}\label{prop:alg-ini}
  Let $E$ be the embedding constructed by the end of the initialization phase (Lines~\ref{alg:ini-start}--\ref{alg:ini-end}) of \cref{alg:Euclid-embedding}.
  Let $c_1,\ldots, c_x$ be the embedded alternatives with $E(c_1)<\cdots < E(c_x)$.
  The following holds.
  \begin{enumerate}[(1)]
    \item $\inner(\succ_1,\succ_2)=\{c_1,c_2,\dots,c_x\}$ with $c_x=a_1$ and $c_1=b_1$.
    \item Voter~$v_2$ prefers $c_1 \succ_2 c_2 \succ_2 \dots \succ_2 c_x$.
    \item\label{prop:ini--dist-v2-v1} $E(v_1)-E(v_2)=|\inner(\succ_1,\succ_2)|+1$.
    \item\label{prop:ini-order} $E(v_2) < E(c_1)$ and $E(c_{x})<E(v_1)$.
    \item If $\pref_1$ and $\pref_2$ are single-peaked,
    then voter~$v_1$ prefers $c_x \succ_1 c_{x-1} \succ_1 \dots \succ_1 c_1$.
  \end{enumerate}
\end{proposition}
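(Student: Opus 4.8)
The plan is to establish each of the five claims in \cref{prop:alg-ini} by carefully tracking what the initialization phase (Lines~\ref{alg:ini-start}--\ref{alg:ini-end}) actually does, appealing to the definition of $\inner$ and to \cref{lem:inner} where the single-peaked hypothesis is needed. The first thing I would do is read off directly from Lines~\ref{alg:inner-start}--\ref{alg:inner-end} that the alternatives receiving an embedding value during the loop are exactly those $b_i$ with $b_i \succeq_1 b_1$ and $b_i \succeq_2 a_1$; by the definition of $\inner(\pref_1,\pref_2)$ this set is precisely $\inner(\pref_1,\pref_2)$, which gives the set equality in claim~(1). Because the loop iterates over $b_1 \pref_2 b_2 \pref_2 \cdots \pref_2 b_m$ in the order of $\pref_2$ and assigns the successively increasing values $p=1,2,\dots$, the inner alternatives are laid out left-to-right in decreasing $\pref_2$-preference, which is claim~(2): $c_1 \succ_2 c_2 \succ_2 \cdots \succ_2 c_x$. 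In particular the $\pref_2$-most-preferred alternative $b_1$, which belongs to $\inner(\pref_1,\pref_2)$ by \cref{lem:inner}~(\ref{lem:inner-first-ranked}), is assigned value $1$ and is therefore leftmost, so $c_1=b_1$.

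Next I would handle claim~(5), the $\pref_1$-order of the inner alternatives, since it is what pins down $c_x=a_1$ and thus completes claim~(1). Here the single-peaked hypothesis enters: by \cref{lem:inner}~(\ref{lem:inner-reverse}), for any two distinct inner alternatives $x,y$ we have $x\succ_1 y$ iff $y\succ_2 x$, so the $\pref_1$-order on the inner alternatives is exactly the reverse of the $\pref_2$-order. Combined with claim~(2), this yields $c_x \succ_1 c_{x-1} \succ_1 \cdots \succ_1 c_1$, establishing claim~(5), and in particular the $\pref_1$-most-preferred alternative $a_1$ (which is inner by \cref{lem:inner}~(\ref{lem:inner-first-ranked})) sits at the right end, giving $c_x=a_1$ and finishing claim~(1).

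For claims~(3) and~(4) I would simply evaluate the assignments in Lines~\ref{alg:v2-pos}--\ref{alg:ini-end}. After the loop, $p$ has been incremented once per inner alternative and so equals $|\inner(\pref_1,\pref_2)|+1$; the inner alternatives occupy the values $1,2,\dots,|\inner(\pref_1,\pref_2)|$. Line~\ref{alg:v2-pos} sets $E(v_2)=0$ and Line~\ref{alg:ini-end} sets $E(v_1)=p=|\inner(\pref_1,\pref_2)|+1$, whence $E(v_1)-E(v_2)=|\inner(\pref_1,\pref_2)|+1$, which is claim~(3). For claim~(4), note $E(v_2)=0<1=E(c_1)$ and $E(c_x)=|\inner(\pref_1,\pref_2)|<|\inner(\pref_1,\pref_2)|+1=E(v_1)$, giving $E(v_2)<E(c_1)$ and $E(c_x)<E(v_1)$.

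The only genuinely delicate point—and the step I expect to be the main obstacle—is making sure the single-peaked hypothesis is invoked exactly where it is indispensable and nowhere claimed for free. Claims~(1)--(4) should go through for \emph{arbitrary} preference orders, relying only on the loop mechanics and on \cref{lem:inner}~(\ref{lem:inner-first-ranked}) for membership of $a_1,b_1$ in $\inner(\pref_1,\pref_2)$; it is only in the identification $c_x=a_1$ and in claim~(5) that I must lean on \cref{lem:inner}~(\ref{lem:inner-reverse}) and hence on single-peakedness. A subtle wrinkle is the degenerate case $a_1=b_1$, where $\inner(\pref_1,\pref_2)=\{a_1\}$ is a singleton: there $x=1$, the orderings in claims~(2) and~(5) are vacuous, and I would confirm that $c_1=c_x=a_1=b_1$ and the distance formulas in~(3) and~(4) still read correctly. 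Verifying that the reverse-order argument for claim~(5) does not secretly assume $x\ge 2$ where $\cref{lem:inner}~(\ref{lem:inner-reverse})$ already covers the singleton case is the one place I would slow down and check carefully.
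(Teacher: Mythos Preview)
Your approach is essentially the paper's: read claims (1)--(4) off the loop mechanics and the definition of $\inner$, and derive claim (5) from claim (2) via \cref{lem:inner}~(\ref{lem:inner-reverse}). The one place you take an unnecessary detour is the identification $c_x=a_1$, which you route through claim (5) and hence through single-peakedness---creating exactly the tension you flag, since claim (1) is stated unconditionally. In fact $c_x=a_1$ follows directly from the definition of $\inner$ with no single-peaked hypothesis: every $c\in\inner(\succ_1,\succ_2)$ satisfies $c\succeq_2 a_1$ by definition, and $a_1$ is itself inner by \cref{lem:inner}~(\ref{lem:inner-first-ranked}), so $a_1$ is the $\succ_2$-least inner alternative and therefore the last one the loop embeds. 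This is what the paper means by saying the first three statements ``follow directly from Lines~\ref{alg:inner-start}--\ref{alg:inner-end} and from the definition of $\inner$,'' and it dissolves your worry about where single-peakedness is indispensable: only in claim (5).
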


\begin{proof}
  The first three statements follow directly 
  from Lines~\ref{alg:inner-start}--\ref{alg:inner-end} and from the definition of $\inner(\succ_1,\succ_2)$.
  Moreover, it holds that $E(c_1)=1$, $E(c_x)=|\inner(\succ_1,\succ_2)|$, $E(v_2)=0$, and $E(v_1)=|\inner(\succ_1,\succ_2)|+1$.
  This implies the fourth statement.

  As to the last statement,
  consider an arbitrary embedded alternative~$c_j$ with $j\in \{1,\ldots,x-1\}$.
  Then, by the second statement, we have that $c_j \succ_2 c_{j+1}$.
  By \cref{lem:inner}(\ref{lem:inner-reverse}),
  we have that $c_{j+1} \succ_1 c_j$.
\end{proof}

After having embedded all inner alternatives,
the main loop (Lines \ref{alg:loop-start}--\ref{alg:loop-end}) extends the embedding by alternatingly placing 
alternatives that should be embedded to the right of the existing embedding
and alternatives that should be embedded to the left of the existing embedding.
The corresponding procedure is called Refine() (Lines \ref{alg:refine-start}--\ref{alg:refine-end}) and is used for both voters~$v_1$ and $v_2$.
It searches through the alternatives along the preference order of $v_1$ (resp.\ $v_2$),
finds the first not-yet-embedded alternative(s) that are ranked between two consecutive embedded alternatives by $v_1$ (resp.\ $v_2$),
and embeds them to the right (resp.\ left) of the right-most (resp.\ left-most) alternative. Fallback() in Line \ref{alg:fallback} guarantees that at least one alternative is embedded during each iteration, thus ensuring that the algorithm terminates.

In the following, we prove that our constructed embedding is indeed a \oder by showing that whenever the embedding at the beginning of an iteration of the main loop is a \oder with regard to the already embedded alternatives,
the embedding at the end of the iteration is also a \oder.
To this end, let $D=\{d_1,d_2,\ldots, d_w\}$ be the alternatives that are embedded
at the beginning of an iteration (Line~\ref{alg:loop-start}) with $E(d_1) < E(d_{2}) < \cdots < E(d_w)$,
  and assume that $E$ is a \oder for all alternatives from~$D$.
  We introduce the following notation regarding the concept of a worst alternative among a given set of alternatives.
  Let $\worst(D,v_1)$ (resp.\ $\worst(D,v_2)$) denote
  the alternative from $D$ that is least preferred by voter~$v_1$ (resp.\ $v_2$)  \emph{i.e.}
  \begin{align*}
    \worst(D,v_1) \in D \text{ with } (D\setminus \{a^*\}) \pref_1 a^*, \text{~and~} 
    \worst(D,v_2) \in D \text{ with } (D\setminus \{b^*\}) \pref_2 b^*.
  \end{align*}

  \begin{example}\label{ex:inner}
    Consider the following preferences of voters~$v_1$ and $v_2$.
    \begin{align*}
      v_1 \colon & 1 \succ_1 4 \succ_1 2 \succ_1 3 \succ_1 5 \succ_1 6 \succ_1 7 \succ_1 8,\\
      v_2 \colon & 3 \succ_2 2 \succ_2 1  \succ_2 5 \succ_2 6 \succ_2 4 \succ_2 8 \succ_2 7.
    \end{align*}
    If $D=\{1,2,3,4\}$, then $\worst(D,v_1)=3$ and $\worst(D,v_2)=4$.
  \end{example}

  We introduce another notion called \emph{no later than}.
  \begin{definition}[No later than]
  For two distinct alternatives~$x$ and $y$,
  we say that $x$ is embedded no later than $y$ if one of the following holds.
  \begin{enumerate}[(1)]
    \item Alternatives~$x$ and $y$ are both embedded during initialization.
    \item They are both embedded in the same call to Refine().
    \item When $y$ is to be embedded, $E(x)$ is already defined.
  \end{enumerate}
  \end{definition}

  To show that each iteration (Lines~\ref{alg:loop-start}--\ref{alg:loop-end}) maintains the \ode property of the embedding, we observe the following useful properties.
  \begin{lemma}\label{lem:x-no-later-than-y}
    Let $x$ and $y$ be two distinct alternatives with $x \succ_1 y$ and $x \succ_2 y$.
    Then, \cref{alg:Euclid-embedding} embeds $x$ no later than $y$. 
  \end{lemma}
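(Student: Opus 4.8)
The plan is to prove the lemma by a case distinction on \emph{how} the algorithm embeds $y$, using that every alternative is embedded by exactly one of four mechanisms: during the initialization phase (Lines~\ref{alg:ini-start}--\ref{alg:ini-end}), inside a successful call to Refine($\pref_1$,$v_1$), inside a successful call to Refine($\pref_2$,$v_2$), or inside Fallback (Line~\ref{alg:fallback}). By \cref{prop:alg-ini}~(1), an alternative is embedded during initialization if and only if it is an inner alternative, so these four cases are exhaustive and mutually exclusive.

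First I would dispose of the case where $y$ is an inner alternative. Here I claim that $x$ is inner as well: since $y\in\inner(\pref_1,\pref_2)$ we have $y\succeq_1 b_1$ and $y\succeq_2 a_1$, and combining this with $x\succ_1 y$, $x\succ_2 y$, and transitivity yields $x\succeq_1 b_1$ and $x\succeq_2 a_1$, that is, $x\in\inner(\pref_1,\pref_2)$. Hence $x$ and $y$ are both embedded during initialization, and condition~(1) of ``no later than'' applies.

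For the remaining cases $y$ is embedded in the main loop. Suppose $y$ is embedded by Refine($\pref_1$,$v_1$) (Line~\ref{alg:refine-v1}). In that call, let $j$ be the position of the first not-yet-embedded alternative of $\pref_1$ and $i$ the position of the first already-embedded alternative after it; the alternatives embedded are exactly $c_j,\ldots,c_{i-1}$, and by the choice of $j$ and $i$ these occupy consecutive \emph{undefined} positions while all positions $1,\ldots,j-1$ are already defined. Since $y$ lies in this block it sits at some position $k\in\{j,\ldots,i-1\}$, and because $x\succ_1 y$ the alternative $x$ sits at a strictly smaller position $k'<k\le i-1$ in $\pref_1$. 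If $k'<j$, then $x$ lies in the already-defined prefix, so $E(x)$ is defined before $y$ is embedded and condition~(3) applies; if $k'\in\{j,\ldots,i-1\}$, then $x$ belongs to the very same block and is embedded in the same Refine call, so condition~(2) applies. The subcase where $y$ is embedded by Refine($\pref_2$,$v_2$) (Line~\ref{alg:refine-v2}) is identical after swapping the roles of the two voters and invoking $x\succ_2 y$ in place of $x\succ_1 y$. Finally, if $y$ is embedded by Fallback, then $y$ is the first not-yet-embedded alternative of $\pref_1$ (Line~\ref{alg:refine-ff}), so every alternative ranked above $y$ in $\pref_1$---in particular $x$, since $x\succ_1 y$---is already embedded, and again condition~(3) applies.

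The only genuinely delicate point I expect is justifying that the alternatives placed by a single Refine call are precisely the consecutive undefined positions flanked by two defined ones, so that the position $k'$ of $x$ can only fall into the defined prefix or into the current block and never strictly to the right of it; this is exactly what the definitions of $j$ and $i$ together with the loop over $k=j,\ldots,i-1$ (Line~\ref{alg:for-j-i-1}) guarantee. Everything else is a direct reading of which alternatives already have a defined embedding at the moment $y$ is placed.
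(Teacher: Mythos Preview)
Your proof is correct and follows essentially the same approach as the paper: a case distinction on how $y$ gets embedded (initialization, Refine($\pref_1$,$v_1$), Refine($\pref_2$,$v_2$), or Fallback), using $x\succ_1 y$ in the $v_1$-cases and $x\succ_2 y$ in the $v_2$-case. The only cosmetic difference is that the paper invokes \cref{lem:inner}(\ref{lem:inner-transitive}) for the inner case whereas you unfold that one-line argument directly; your handling of the Refine cases via positions $k'<k$ is the same reasoning the paper gives, just phrased slightly more explicitly.
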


  \begin{proof}
    If $y \in \inner(\succ_1,\succ_2)$, then by \cref{lem:inner}~(\ref{lem:inner-transitive}),
    it follows that $x\in \inner(\pref_1,\pref_2)$,
    meaning that $x$ and $y$ are both embedded during the initialization,
    and that $x$ is embedded no later than~$y$.

    Now, let us assume that $y \notin\inner(\pref_1,\pref_2)$. 
    Consider the step when $y$ was embedded.
    There are three cases.
    
    If $y$ has been embedded in Line \ref{alg:for-j-i-1} in a call to Refine($\succ_1$, $v_1$), then let $j$ and $i$ be the indices as defined in that call such that $a_j \succeq_1 y \succ_1 a_i$.
    If $E(x)$ was defined, \emph{i.e.}\ $x$ has already been embedded, then by the definition of ``no later than'',~$x$ is embedded no later than $y$.  
    If $E(x)$ was not defined, then since $a_j$ was defined as the first alternative that is not yet embedded,
    it follows that $a_j \succeq_1 x$.
    Since $x \succ_1 y$, it follows that $a_j \succ_1 x \succ_1y \succ_1 a_i$, implying that $x$ is embedded in the same call to Refine~() as $y$.
    Thus, $x$ is embedded no later than $y$.


    Using a reasoning similar to the previous case, we can infer that $x$ is also embedded no later than $y$ when $y$ has been embedded in call to Refine($\succ_2$, $v_2$) because $x\succ_2 y$.
    
    If $y$ has been embedded in the subprocedure Fallback() in Line \ref{alg:fallback},
    meaning that it is also the only alternative that is embedded during that iteration, then
    line \ref{alg:refine-ff} guarantees that  $E(x)$ was already defined, and thus,~$x$ is embedded no later than $y$.
    %
    %
  \end{proof}

  The next lemma ensures that each alternative not from $\inner(\pref_1,\pref_2)$ is embedded by exactly one of the three subroutines---Refine($\pref_1$, $v_1$), Refine($\pref_2$, $v_2$), or Fallback().
  \begin{lemma}\label{lem:only-one-case-applies}
    Let $\pref_1$ and $\pref_2$ be two single-peaked preference orders. 
    Consider an arbitrary not-yet-embedded alternative~$x$, 
    \emph{i.e.}\ $x\notin D$.
    Then, $\worst(D,\succ_1)\pref_1 x$ or $\worst(D,\succ_2) \pref_2 x$.
  \end{lemma}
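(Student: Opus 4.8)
The plan is to argue by contradiction, exploiting that the set~$D$ of already-embedded alternatives is upward closed under the partial order in which one alternative dominates another exactly when \emph{both} voters prefer it. First I would record this closure property: if $d\in D$ and some alternative~$y$ satisfies $y\succ_1 d$ and $y\succ_2 d$, then $y\in D$. This follows from \cref{lem:x-no-later-than-y}, which guarantees that such a~$y$ is embedded no later than~$d$; since $d$ was already embedded at the start of the current iteration, inspecting the three cases in the definition of ``no later than'' shows that $y$ was too, so $y\in D$. The contrapositive is the form I will use: for the not-yet-embedded~$x\notin D$ and every $d\in D$, we have $d\succ_1 x$ or $d\succ_2 x$.

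Next, suppose toward a contradiction that the conclusion fails. Writing $a^*\coloneqq\worst(D,\succ_1)$ and $b^*\coloneqq\worst(D,\succ_2)$, the negation (using totality and $x\neq a^*,b^*$, as $x\notin D$) reads $x\succ_1 a^*$ and $x\succ_2 b^*$. Applying the closure property to $d=a^*$ forces $a^*\succ_2 x$ (the disjunct $a^*\succ_1 x$ is excluded by $x\succ_1 a^*$), and to $d=b^*$ forces $b^*\succ_1 x$. Hence $b^*\succ_1 x\succ_1 a^*$ and $a^*\succ_2 x\succ_2 b^*$, a reversed triple with~$x$ in the middle, which is exactly the configuration feeding \cref{lem:two-sp-alpha}. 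In particular $a^*\neq b^*$ (from $b^*\succ_1 a^*$), so $|D|\ge 2$.

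Then I would use single-peakedness to confine~$x$ to the inner region and derive the contradiction. Let $a_1,b_1$ be the top alternatives of $\pref_1,\pref_2$; both lie in $\inner(\pref_1,\pref_2)\subseteq D$ by \cref{lem:inner}(\ref{lem:inner-first-ranked}) (the inner alternatives are embedded during initialization), so $a_1,b_1\neq x$. I claim $x\succ_1 b_1$ and $x\succ_2 a_1$. For the first: if $b_1=a^*$ it is immediate from the triple ($x\succ_1 a^*$); otherwise $b_1\notin\{a^*,b^*,x\}$, where $b_1\neq b^*$ because $b_1$ is the $\succ_2$-maximum of~$D$ while $b^*$ is its $\succ_2$-minimum and $|D|\ge 2$, so \cref{lem:two-sp-alpha} applied to the triple with fourth alternative~$b_1$ gives $x\succ_1 b_1$ or $x\succ_2 b_1$, and the latter is impossible as $b_1$ is $\succ_2$-maximal. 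The claim $x\succ_2 a_1$ is symmetric. But $x\succeq_1 b_1$ and $x\succeq_2 a_1$ say precisely that $x\in\inner(\pref_1,\pref_2)\subseteq D$, contradicting $x\notin D$.

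The main obstacle is the first step: rigorously converting \cref{lem:x-no-later-than-y} into the clean set-theoretic closure statement for~$D$, i.e.\ verifying that each of the three ``no later than'' cases genuinely forces $y\in D$ once $d$ is known to be embedded before the current iteration. The single-peaked step is then a direct application of \cref{lem:two-sp-alpha}, with only minor bookkeeping for the degenerate cases $b_1=a^*$ or $a_1=b^*$, each of which collapses to reading the required inequality straight off the reversed triple.
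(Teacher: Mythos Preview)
Your proposal is correct and follows essentially the same route as the paper: assume for contradiction that $x\succ_1 a^*$ and $x\succ_2 b^*$, use \cref{lem:x-no-later-than-y} to obtain the reversed triple $b^*\succ_1 x\succ_1 a^*$ and $a^*\succ_2 x\succ_2 b^*$, and then invoke \cref{lem:two-sp-alpha} to force $x\in\inner(\pref_1,\pref_2)\subseteq D$. The only difference is organizational: the paper splits into the cases $a_1=b^*$ versus $a_1\neq b^*$, whereas you treat the two target inequalities $x\succ_1 b_1$ and $x\succ_2 a_1$ symmetrically, each with its own degenerate subcase ($b_1=a^*$ and $a_1=b^*$ respectively); your decomposition is slightly cleaner and in fact handles the subcase $b_1=a^*$ within Case~2 that the paper glosses over.
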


  \begin{proof}
    Let $a^*=\worst(D,\pref_1)$ and $b^*=\worst(D,\pref_2)$. 
    Towards a contradiction, suppose that $\pref_1$ and $\pref_2$ are single-peaked but
    $x$ is an alternative with $x\notin D$ such that
    \begin{align}
      x\pref_1 a^* \text{ and } x\pref_2 b^*.\label{lem:assum}
    \end{align}
    This implies that
    \begin{align}
      a_1 \neq a^* \text{ and } b_1 \neq b^*,\label{eq:first-rankeds}
    \end{align}
    as $a_1 \succ_1 x$ and $b_1 \succ_2 x$; recall that $a_1$ (resp.\ $b_1$) is the alternative most preferred by voter~$v_1$ (resp.\ $v_2$). 
    By \cref{lem:x-no-later-than-y} and since $x$ is not yet embedded,
    the assumption \eqref{lem:assum} also implies that 
    \begin{align}\label{lem-eq:temp-votes}
      & v_1\colon b^*\succ_1 x \succ_1 a^* \text{ and } v_2\colon a^* \succ_2 x \succ_2 b^*,\text{ and thus, }  \\
     & a^* \neq b^*. \label{eq:worst-different}
      \end{align}
      By the definitions of $a_1$ and $b_1$, 
      we further infer that 
    \begin{align}
      v_1\colon a_1 \succeq_1 b^*\succ_1 x \succ_1 a^* \text{ and } v_2\colon  b_1 \succeq_2 a^* \succ_2 x \succ_2 b^*. \label{lem-eq:current-votes}
    \end{align}
     We distinguish between two cases, in each case aiming to obtain $x\in \inner(\pref_1, \pref_2)$ which is a contradiction to $x\notin D$ as $\inner(\pref_1,\pref_2)\subseteq D$.

     \noindent \textbf{Case~1:} If $a_1 = b^*$, then the preferences given in~\eqref{lem-eq:current-votes} are equivalent to 
       \begin{align}
      v_1\colon a_1 \succ_1 x \succ_1 a^* \text{ and } v_2\colon b_1 \succeq_2 a^* \succ_2 x \succ_2 a_1. \label{lem-eq:current-votes-2}
       \end{align}
       Furthermore, $b_1\neq a^*$ as otherwise $x\in \inner(\pref_1,\pref_2)$---a contradiction.
        Consequently, the preferences given in~\eqref{lem-eq:current-votes-2} imply that 
        \begin{align}
          v_1\colon a_1 \succ_1 x \succ_1 a^* \text{ and } v_2\colon b_1 \succ_2 a^* \succ_2 x \succ_2 a_1. \label{lem-eq:current-votes-3}
        \end{align}
        Since $\pref_1$ and $\pref_2$ are single-peaked,
       by \cref{lem:two-sp-alpha} and by \eqref{lem-eq:current-votes-3},
       we must have that $x\succ_1 b_1$.
       However, this implies that $x\in \inner(\pref_1, \pref_2)$ since $x\succ_2 a_1$---a contradiction.

     \noindent\textbf{Case~2:}  If $a_1 \neq b^*$, then the preferences given in~\eqref{lem-eq:current-votes} imply that 
       \begin{align}
      v_1\colon a_1 \succ_1 b^* \succ_1 x \succ_1 a^* \text{ and } v_2\colon b_1 \succeq_2 a^* \succ_2 x \succ_2 b^*. \label{lem-eq:current-votes-4}
       \end{align}
       Since $\pref_1$ and $\pref_2$ are single-peaked,
       by \cref{lem:two-sp-alpha} and by \eqref{lem-eq:current-votes-4},
       we must have that $x\succ_2 a_1$ and $x\succ_1 b_1$, implying that $x\in \inner(\pref_1, \pref_2)$---a contradiction.
     \end{proof}

     For two distinct alternatives that have not been embedded, we observe the following.

     \begin{lemma}\label{lem:only-one-case-applies-2}
        Let $\pref_1$ and $\pref_2$ be two single-peaked preference orders. 
        Let $x$ and $y$ be two distinct alternatives that have not been embedded,
        \emph{i.e.}\ $x,y\notin D$ with $x\neq y$.
        For each~$r\in \{1,2\}$ it holds that
        if $x\pref_r y \pref_r \worst(D,\succ_r)$,
        then $\worst(D,\succ_s) \pref_s x \pref_s y$, where $s \in \{1,2\}\setminus \{r\}$. 
     \end{lemma}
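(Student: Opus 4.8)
The plan is to fix, without loss of generality, $r=1$ and $s=2$, and to abbreviate $a^*=\worst(D,\pref_1)$ and $b^*=\worst(D,\pref_2)$, so that the hypothesis reads $x\pref_1 y\pref_1 a^*$ and the goal is $b^*\pref_2 x\pref_2 y$. This conclusion splits naturally into two parts, the membership/ordering fact $b^*\pref_2 x$ (together with the companion fact $b^*\pref_2 y$) and the reversal $x\pref_2 y$. The first is cheap, while the second carries the real content.

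For the cheap part I would invoke \cref{lem:only-one-case-applies}, which applies since $x,y\notin D$. Applied to $x$ it yields $a^*\pref_1 x$ or $b^*\pref_2 x$; but the hypothesis gives $x\pref_1 y\pref_1 a^*$, hence $x\pref_1 a^*$, so asymmetry of $\pref_1$ rules out the first disjunct and forces $b^*\pref_2 x$. Running the identical argument on $y$ (using $y\pref_1 a^*$) gives $b^*\pref_2 y$. This already establishes $b^*\pref_2 x$, which is half of the desired chain, and records $b^*\pref_2 y$ for later use.

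The crux is to upgrade the forward relation $x\pref_1 y$ into the reversal $x\pref_2 y$, and here I would argue by contradiction, assuming $y\pref_2 x$. Combining $b^*\pref_2 y$ from the previous step with the fact that $a^*\in D$ cannot be worse for $v_2$ than the worst element $b^*$ (so $a^*\succeq_2 b^*$) yields $a^*\pref_2 y$, and therefore the reversed chain $a^*\pref_2 y\pref_2 x$ matching the forward chain $x\pref_1 y\pref_1 a^*$. The three alternatives $x,y,a^*$ are pairwise distinct because $x,y\notin D$ while $a^*\in D$, so this is exactly the crossing pattern of \cref{lem:two-sp-alpha} with middle alternative $y$; single-peakedness then forces $y\pref_1 w$ or $y\pref_2 w$ for every fourth alternative $w$.

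The main obstacle, and the one genuinely clever move, is to exhibit a witness $w$ that both voters prefer to $y$, contradicting the line above. I expect the peak $a_1$ of $v_1$ to be the right choice: $a_1\pref_1 y$ is immediate since $a_1$ is $v_1$'s most preferred alternative and $a_1\neq y$, while $a_1\pref_2 y$ follows from $a_1\in\inner(\pref_1,\pref_2)\subseteq D$ (whence $a_1\succeq_2 b^*$) together with $b^*\pref_2 y$. It remains to check that $a_1$ is distinct from $x,y,a^*$; the only delicate point is $a_1\neq a^*$, which holds because $a_1=a^*$ together with the hypothesis would give $x\pref_1 a_1$, contradicting that $a_1$ is $v_1$'s peak (distinctness from $x,y$ is clear as $a_1\in D$ but $x,y\notin D$). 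Since $w=a_1$ then violates both $y\pref_1 a_1$ and $y\pref_2 a_1$, it contradicts \cref{lem:two-sp-alpha}, so $y\pref_2 x$ is impossible and $x\pref_2 y$ holds. Combined with $b^*\pref_2 x$ this gives $\worst(D,\pref_2)\pref_2 x\pref_2 y$, and the case $r=2,\,s=1$ follows by the symmetric argument.
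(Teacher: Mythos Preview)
Your proof is correct and follows essentially the same route as the paper's: both invoke \cref{lem:only-one-case-applies} to obtain $\worst(D,\pref_s)\pref_s\{x,y\}$, then argue by contradiction that $y\pref_s x$ would produce the reversed triple $\worst(D,\pref_r)\pref_s y\pref_s x$ against $x\pref_r y\pref_r \worst(D,\pref_r)$, and finish by taking the peak $a_1=\mathsf{best}(\pref_r)$ as the fourth alternative violating \cref{lem:two-sp-alpha}. You are slightly more explicit than the paper about the distinctness checks (in particular $a_1\neq a^*$), which is a welcome bit of care.
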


     \begin{proof}
       \newcommand{\best}{\mathsf{best}}
       Assume that $x \pref_r y \pref_r \worst(D,\succ_r)$ holds.
       Let $\mathsf{best}(\succ_r)$ be the most preferred alternative in the preference order~$\succ_r$.
       Then, we have that
       \begin{align}
         & \mathsf{best}(\succ_r) \pref_r x \pref_r y \pref_r \worst(D, \pref_r). \label{eq:lem:only-one-r}
       \end{align}       
       By \cref{lem:only-one-case-applies},
       it follows that $\worst(D,\succ_s) \pref_s \{x,y\}$.
       Thus, it remains to show that $x\pref_s y$.
       Towards a contradiction, suppose that $y \pref_s x$.
       By the definition of $\worst(D, \pref_s)$, voter~$v_s$ must have preferences
       \begin{align}
         \{\mathsf{best}(\succ_r), \worst(D,\pref_r)\} \succeq_s \worst(D,\succ_s) \pref_s y \pref_s x.\label{eq:only-one-case-applies-2-contrad}
       \end{align}

    \noindent   Together with~\eqref{eq:lem:only-one-r}, we have
       \begin{align*}
         \best(r) \pref_r x \pref_r y \pref_r \worst(D,\succ_r), \text{ and }
         \{\best(r), \worst(D,\succ_r)\} \pref_s y \pref_s x,
       \end{align*}
       a contradiction to \cref{lem:two-sp-alpha}.
     \end{proof}
     
      We observe the following for the subprocedure Refine($\succ$, $v$).

     \begin{lemma}\label{lem:refine-v}
       Let $j$ and $i$ be as defined in a call to~Refine($\succ$, $v$).
       If $E$ is a \oder for the alternatives~$c_{j-1}$ and $c_i$ and for the voter~$v$ %
       so that the 1-Euclidean property is satisfied,
       then 
       $|E(c_{j-1})-E(v)|<|E(c_j)-E(v)|<\dots< |E(c_{i-1})-E(v)|<|E(c_i)-E(v)|$.
     \end{lemma}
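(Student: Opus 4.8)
The plan is to recognise that the assignment used inside \cref{alg:Euclid-embedding} places the alternatives $c_j,\dots,c_{i-1}$ at \emph{distances} from the voter~$v$ that form a strictly increasing linear interpolation between the distance of $c_{j-1}$ and the distance of $c_i$. Writing $d_{j-1}\coloneqq |E(c_{j-1})-E(v)|$, $d_i\coloneqq |E(c_i)-E(v)|$ and $\Delta\coloneqq \frac{d_i-d_{j-1}}{i-j+1}$, the whole chain of inequalities in the statement will collapse to the single fact $\Delta>0$. So the argument splits cleanly into two parts: (a) proving $d_{j-1}<d_i$, and (b) a direct computation of $|E(c_k)-E(v)|$ from the formula.

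First I would record that $c_{j-1}$ is well defined and already embedded. Since the most preferred alternative of $\succ$ is an inner alternative (\cref{lem:inner}~(\ref{lem:inner-first-ranked})), it is embedded during initialization, so the first undefined index satisfies $j\ge 2$; as $i$ is chosen to be a defined index with $i>j$, both $c_{j-1}$ and $c_i$ carry a value under $E$. Because $c_1\succ\cdots\succ c_m$ and $j-1<i$, we have $c_{j-1}\succ c_i$. The hypothesis that $E$ is a \oder for $c_{j-1}$, $c_i$ and $v$ then yields exactly $d_{j-1}=|E(c_{j-1})-E(v)|<|E(c_i)-E(v)|=d_i$, and hence $\Delta>0$ (the denominator $i-j+1\ge 2$ is positive).

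Next I would carry out the distance computation for the two branches of Refine. For $v=v_1$ the assignment gives $E(c_k)-E(v_1)=d_{j-1}+\Delta\,(k-j+1)$, which is strictly positive for every $k\in\{j,\dots,i-1\}$ because $d_{j-1}\ge 0$, $\Delta>0$ and $k-j+1\ge 1$; hence $|E(c_k)-E(v_1)|=d_{j-1}+\Delta\,(k-j+1)$. For $v=v_2$ the assignment instead gives $E(c_k)-E(v_2)=-\bigl(d_{j-1}+\Delta\,(k-j+1)\bigr)<0$, so the absolute value is the very same expression. In both cases $k\mapsto |E(c_k)-E(v)|$ is strictly increasing because $\Delta>0$, and this expression extrapolates correctly to the endpoints: at $k=j-1$ it returns $d_{j-1}$ and at $k=i$ it returns $d_{j-1}+\Delta(i-j+1)=d_i$. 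Chaining the values gives $d_{j-1}<d_{j-1}+\Delta<\cdots<d_{j-1}+\Delta(i-j)<d_i$, which is precisely the asserted $|E(c_{j-1})-E(v)|<|E(c_j)-E(v)|<\cdots<|E(c_{i-1})-E(v)|<|E(c_i)-E(v)|$.

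Once the interpolation is recognised the argument is essentially a one-line computation, so I do not expect a genuine obstacle; the only points needing care are bookkeeping. I must make sure $c_{j-1}$ is actually embedded (which is why the $j\ge 2$ observation matters), and that each newly placed alternative lands strictly on one side of $v$ so that the absolute value collapses to the linear expression without a case split. Note in particular that $c_{j-1}$ and $c_i$ need not lie on the same side of $v$ as the interpolated points; since the claim concerns only distances, this is irrelevant here, and the consistency of the placement for the \emph{other} voter is dealt with separately through the single-peaked property. The single external input is $d_{j-1}<d_i$, which is immediate from $c_{j-1}\succ c_i$ together with the \oder hypothesis.
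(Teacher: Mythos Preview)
Your proposal is correct and follows essentially the same approach as the paper: both establish $d_{j-1}<d_i$ from the \oder hypothesis (since $c_{j-1}\succ c_i$), then verify directly from the assignment in Lines~\ref{alg:i-exists-v1}--\ref{alg:i-exists-v2} that $|E(c_k)-E(v)|=d_{j-1}+\Delta\,(k-j+1)$, which is strictly increasing in~$k$. You are merely more explicit about the bookkeeping ($j\ge 2$, the two branches collapsing to the same absolute value), which the paper leaves implicit.
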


     \begin{proof}
       By the definitions of $j$ and $i$,
       since the 1-Euclidean property is maintained for voter~$v$,
       it holds that
       \begin{align}
         \dist(j-1)=|E(c_{j-1})-E(v)| < |E(c_i)-E(v)| = \dist(i), \label{eq:1-Euclid-j-1+i}
       \end{align}
       Note that $\dist(j-1)$ and $\dist(i)$ are defined in Lines \ref{alg:dist-j-1}--\ref{alg:dist-i}.
       From Lines~\ref{alg:i-exists-v1}--\ref{alg:i-exists-v2}, it is straightforward to verify that
       for each $c_k$ with $j-1\le k \le i$,
       \begin{align}
         |E(c_k)-E(v)| & = \dist({j-1}) + \frac{\dist({i})-\dist({j-1})}{i-j+1}\cdot (k-j+1)\label{eq:dist-ck-1}
       \end{align}

       \noindent Combining \eqref{eq:dist-ck-1} with \eqref{eq:1-Euclid-j-1+i}, we obtain
       the chain of inequalities in the lemma.
       %
     \end{proof}

\subsection{Correctness of \cref{thm:Euclidean-relationship}}
From \cref{lem:refine-v}, we know that Refine($\succ$, $v$) ensures the \ode property for voter~$v$ with regard to the already embedded and the newly added alternatives.
Together with \cref{lem:x-no-later-than-y,lem:only-one-case-applies,lem:only-one-case-applies-2}, we are ready to show the correctness of \cref{thm:Euclidean-relationship}.
  

\begin{proof}[Proof of \cref{thm:Euclidean-relationship}]
  Let $\pref_1$ and $\pref_2$ be single-peaked.
  First of all, by \cref{prop:alg-ini}, the initialization phase computes
  a \oder of the two preference orders~$\pref_1$ and $\pref_2$ when restricted to the inner alternatives~$\inner(\succ_1, \succ_2)$.
  Thus, to prove the correctness,
  we only need to show that each iteration of the main loop (Lines~\ref{alg:loop-start}--\ref{alg:loop-end}) returns an extended embedding that is
  a \oder of the embedded alternatives. 
  To achieve this, we need to show that the procedures Refine($\succ_1$, $v_1$), Refine($\succ_2$, $v_2$),
  and Lines~\ref{alg:ff-j-start}--\ref{alg:ff-j-end} extend the existing \oder  
  to one that is \ode with respect to the alternatives that have already been embedded and also with respect to those which are newly embedded.
  To this end, let $D$ be the alternatives that are embedded at the beginning of an iteration, 
  and assume that $E$ is a \oder for all alternatives in~$D$.
  Let $a^*=\worst(D,v_1)$ (resp.\ $b^*=\worst(D,v_2)$) denote  the alternative from $D$ that is least preferred by voter~$v_1$ (resp.\ $v_2$).
  Let $C$ be the set of alternatives that are to be embedded. 

  We distinguish between three cases.

  \noindent \textbf{Case 1:}
  $C$ has been embedded in a call to Refine($\succ_1$, $v_1$).
  By the procedure~Refine($\succ_1$, $v_1$),
  the two embedded alternatives that the algorithm identifies are $a_j$ and $a_i$ such that $C=\{a_j,a_{j+1},\dots,a_{i-1}\}$ and
  \begin{align}
    a_j \succ_1 a_{j+1} \succ_1 \dots a_{i-1} \succ_1 a_i \succeq_1 a^*.\label{eq:case1-v1}
  \end{align}
  By assumption, the embedding~$E$ is a \oder for voters~$v_1$ and $v_2$ and for the alternatives from $D$. 
  By \cref{lem:refine-v},
  it follows that $E$ is also a \oder for voter~$v_1$ and for all alternatives from $D\cup C$.
  In particular, it holds that, 
  \begin{align}
    \forall k \in \{1,2,\ldots, i-1\}\colon |E(a_k)-E(v_1)| < |E(a_{k+1})-E(v_1)|.\label{eq:case1-v1-dists}
  \end{align}
  
  It remains to show that $E$ is also a \oder for voter~$v_2$ and for all alternatives from 
  $D\cup C$.
  Using \cref{lem:only-one-case-applies-2} on the preferences in~\eqref{eq:case1-v1}, voter~$v_2$ must have preferences~$b^*\succ_2 a_j \succ_2 a_{j+1} \succ_2 \dots \succ_2 a_{i-1}$. 
  By the embedding of the alternatives from $C$ (Line~\ref{alg:i-exists-v1}),
  for each alternative~$a_k$ with $j \le k \le i-1$
  it holds that
  \begin{align}
    E(v_2)<E(v_1)<E(a_k)<E(a_{k+1}), \label{eq:order-v2-v1-aj}
  \end{align}
  implying that
  $|E(a_k)-E(v_2)|<|E(a_{k+1}-E(v_2))|$.
  Thus, to show that~$E$ remains a \oder for voter~$v_2$ regarding the alternatives from $D\cup C$, 
  we only need to show that
  $|E(b^*)-E(v_2)|<|E(a_j)-E(v_2)|$.
  Now, if we can show that
  \begin{align}b^* \succeq_1 a_{j-1},\label{eq:case1-b*}\end{align} then
  we can derive that
  \allowdisplaybreaks  \begin{align*}
    |E(b^*)-E(v_2)| & \le |E(b^*)-E(v_1)| +|E(v_1)-E(v_2)|\\
                    &\stackrel{\eqref{eq:case1-b*}}{<} |E(a_{j-1})-E(v_1)| +|E(v_1)-E(v_2)|\\
    & \stackrel{\eqref{eq:case1-v1-dists}}{<} |E(a_{j})-E(v_1)| + |E(v_1)-E(v_2)|\\
    &\stackrel{\eqref{eq:order-v2-v1-aj}}{=} |E(a_{j})-E(v_2)|,
  \end{align*}
  which is what we needed to show.
  
  Thus, the only task remained is to show that
  \eqref{eq:case1-b*} holds. We distinguish between two cases.
  If $b^* \in \inner(\succ_1,\succ_2)$,
  then $b^*\succeq_2 a_1$. By the definition of $b^*$, this implies $b^*=a_1$, and thus $b^* \succeq_1 a_{j-1}$ as $a_1$ is the first alternative in~$\succ_1$.

  If $b^* \notin \inner(\succ_1,\succ_2)$, then $b^*$ was embedded during a previous iteration of the main loop.
  Let us consider this iteration where $b^*$ was embedded.
  Suppose for the sake of contradiction that $a_{j-1}\succ_1 b^*$.
  By the definition of $a_j$, it follows that
  $a_j \succ_1 a_{j+1} \succ_1 \dots \succ_1 a_{i-1}\succ_1 b^*$.
  Since $a_j$ will be embedded later than $b^*$,
  it follows 
  that Refine($\succ_1$, $v_1$) returned false.
  However, Refine($\succ_2$, $v_2$) also returned false since no alternative~$b_{i'}$ exists that is embedded before $b^*$ such that $b^*\succ_2 b_{i'}$.
  Finally, the subprocedure Fallback() in Line~\ref{alg:fallback} could not have applied since there $a_j$ remained unembedded during this iteration but $a_j \succ_1 b^*$. Thus, there is no way $b^*$ could have been embedded --- a contradiction. 
  Summarizing, we have shown that
  $b^* \succeq_1 a_{j-1}$.
  This completes the proof for the first case.

  \noindent \textbf{Case 2:}  $C$ has been embedded in a call to Refine($\succ_1$, $v_1$). The reasoning is very similar to the one for Case~1. 
  The two embedded alternatives identified by the algorithm are $b_j$ and $b_i$ such that $C=\{b_j,b_{j+1},\dots,b_{i-1}\}$ and
  \begin{align}
    b_j \succ_2 b_{j+1} \succ_2 \dots b_{i-1} \succ_2 b_i \succeq_2 b^*.\label{eq:case2-v2}
  \end{align}
  By assumption, the embedding~$E$ is a \oder for voters~$v_1$ and $v_2$ and for the alternatives from $D$. 
  By \cref{lem:refine-v},
  it follows that $E$ is also a \oder for voter~$v_2$ and for all alternatives from $D\cup C$.
  In particular, it holds that, 
  \begin{align}
    \forall k \in \{1,2,\ldots, i-1\}\colon |E(b_k)-E(v_1)| < |E(b_{k+1})-E(v_1)|.\label{eq:case1-v2-dists}
  \end{align}

  It remains to show that $E$ is also a \oder for voter~$v_1$ and for all alternatives from 
  $D\cup C$.
  Using \cref{lem:only-one-case-applies-2} on the preferences in~\eqref{eq:case2-v2}, voter~$v_1$ must have preferences~$a^*\succ_1 b_j \succ_1 b_{j+1} \succ_1 \dots \succ_1 b_{i-1}$. 
  By the embedding of the alternatives from $C$ (Line~\ref{alg:i-exists-v1}),
  for each alternative~$b_k$ with $j \le k \le i-1$
  it holds that
  \begin{align}
    E(b_{k'}) < E(b_{k}) < E(v_2)<E(v_1), \label{eq:order-v2-v1-bj}
  \end{align}
  implying that $|E(b_k)-E(v_1)|<|E(b_{k'}-E(v_1))|$.
  Thus, to show that~$E$ remains a \oder for voter~$v_1$ regarding the alternatives from $D\cup C$, 
  we only need to show that
  $|E(a^*)-E(v_1)|<|E(b_j)-E(v_1)|$.

  Now, if we can show that \begin{align} a^* \succeq_2 b_{j-1},\label{eq:case2-a*}\end{align} then
  we can derive that
  \allowdisplaybreaks    \begin{align*}
    |E(a^*)-E(v_1)| & \le |E(a^*)-E(v_2)| +|E(v_1)-E(v_2)|\\
    &\stackrel{\eqref{eq:case2-a*}}{<} |E(b_{j-1})-E(v_2)| +|E(v_1)-E(v_2)|\\
    &\stackrel{\eqref{eq:case1-v2-dists}}{<} |E(b_{j})-E(v_2)| +|E(v_1)-E(v_2)|\\
    &\stackrel{\eqref{eq:order-v2-v1-bj}}{=} |E(b_{j})-E(v_1)|,
  \end{align*}
  which is what we needed to show.
  
  Thus, the only task remaining is to show that \eqref{eq:case2-a*} holds. We distinguish between two cases.
  If $a^* \in \inner(\succ_1,\succ_2)$,
  then $a^*\succeq_1 b_1$.
  By the definition of $a^*$, this implies $a^*=b_1$, and it follows that $a^* \succeq_2 b_{j-1}$ as $b_1$ is the first alternative in~$\succ_2$.
  
  If $a^* \notin \inner(\succ_1,\succ_2)$, then $a^*$ was embedded during a previous iteration of the main loop.
  Let us consider the iteration where $a^*$ was embedded.
  Suppose for the sake of contradiction that $b_{j-1}\succ_2 a^*$.
  By the definition of $b_j$, it follows that
  $b_j\succ_2 b_{j+1} \succ_2 \dots \succ_2 b_{i-1}\succ_2 a^*$.
  Since $a^*$ is the embedded alternative least preferred one $v_1$,
  it follows that Refine($\succ_1$, $v_1$) returned false.
  However, when Refine($\succ_2$, $v_2$) was called,
  all alternatives preferred to $a^*$ by $v_2$ must not be embedded later than $a^*$---a contradiction since $b_j$ will be embedded later than $a^*$.
  Summarizing, we have shown that
  $a^* \succeq_2 b_{j-1}$. 
  This completes the proof of the second case.
  

  \noindent \textbf{Case 3:}
  $C$ has been embedded in a call to Fallback().
  Thus, it must hold that \begin{align} D \succ_1 C \text{ and } D\succ_2 C.\label{eq:case3}\end{align} 
  We infer that $C=\{a_j\}$ where $j=|D|+1$,
  and that \begin{align}E(v_1)<E(a_j).\label{eq:v1<aj}\end{align}
  To show the \ode property, we only need to show that
  $|E(a^*)-E(v_1)| < |E(a_j)-E(v_1)|$ and $|E(b^*)-E(v_2)|< |E(a_j)-E(v_2)|$.
  By Lines~\ref{alg:refine-ff}--\ref{alg:ff-j-end}, it holds that
  $a^*=a_{j-1}$.
  Thus, we infer that
  \begin{align}
    |E(a^*)-E(v_1)| =|E(a_j)-E(v_1)|-1 <|E(a_j)-E(v_1)|.\label{eq:case3-v1-1Euclid}
  \end{align}
  By the definition of $a^*$ and $b^*$ voter~$v_1$ has preferences
  \begin{align}
    b^* \succeq_1 a^*.\label{eq:case3-v1}
  \end{align}
  Since $E$ is a \oder for the voter~$v_2$ and for the alternatives in $d$, this implies the following.
  \begin{align*}
    |E(b^*)-E(v_2)| & \le |E(b^*)-E(v_1)|+|E(v_1)-E(v_2)| \\
                    &\stackrel{\eqref{eq:case3-v1}}{\le} |E(a^*)-E(v_1)| + |E(v_1)-E(v_2)|\\
                    & \stackrel{\eqref{eq:case3-v1-1Euclid}}{<} |E(a_j)-E(v_1)|+|E(v_1)-E(v_2)|\\
    & \stackrel{\eqref{eq:v1<aj}} = |E(a_j)-E(v_2)|.
  \end{align*}
  To conclude, we have shown that in each case the algorithm extends the embedding so that the resulting embedding is a \oder of both voters and of the alternatives already embedded as well as the newly embedded alternatives. 
  Thus, our algorithm indeed computes a \oder of two voters whose preferences are single-peaked.
\end{proof}

\begin{example}\label{ex:algorithm}
  We illustrate our algorithm using the profile from \cref{ex:inner}, with two voters and eight alternatives:
  \begin{align*}
      v_1 \colon & 1 \succ_1 4 \succ_1 2 \succ_1 3 \succ_1 5 \succ_1 6 \succ_1 7 \succ_1 8,\\
      v_2 \colon & 3 \succ_2 2 \succ_2 1  \succ_2 5 \succ_2 6 \succ_2 4 \succ_2 8 \succ_2 7.
  \end{align*}
  It is single-peaked with respect to the order $\rhd$ with $8 \rhd 6 \rhd 3 \rhd 2 \rhd 1 \rhd 4 \rhd 7$, and also with respect to the reverse of $\rhd$.
  Given this profile as input, our algorithm will return a \ode embedding which is depicted in the following line.
  
  {\centering
\begin{tikzpicture}[>=stealth']
  \def \sscale {.23}
  \def \yy {2.4}
  \def \xsca {3}

  \begin{scope}[xshift=0cm,yshift=0cm]
    \foreach \i / \j \ / \k in {%
      -28.75/8/{-\frac{115}{12}},
      -14.5/6/{-\frac{28}{6}},
      -11.5/5/{-\frac{23}{6}},
      {3}/3/{1},
      {6}/{2}/{2},
      {9}/{1}/{3},
      {16.5}/4/{\frac{11}{2}},
      {41}/7/{\frac{41}{3}}} {
    \node[] at (\i*\sscale, -0.5) (a\j) {$\j$};
    \draw[thick,gray] (\i*\sscale,0) -- (\i*\sscale,-.24);
  }

  \foreach \i / \j  / \k in {0/2/0,12/1/4} {
    \node[] at (\i*\sscale, -0.5) (v\j) {$v_\j$};
    \draw[thick,gray] (\i*\sscale,0) -- (\i*\sscale,-.2);
  }

  \foreach \i    in {-10,...,14} {
    \draw[gray, thickline] (\i*\sscale*\xsca,0) -- (\i*\sscale*\xsca,.12);
    \node[] at (\i*\sscale*\xsca, .24) (n\i) {\scriptsize $\i$};
  }

  \foreach \i    in {-30,-29.5,...,42} {
    \draw[gray] (\i*\sscale,0) -- (\i*\sscale,.08);
  }

  \draw[thickline] (-30*\sscale,0) -- (42*\sscale,0);
  \end{scope}
\end{tikzpicture}
\par }

  First of all, our algorithm embeds the inner alternatives~$\inner(\pref_1,\pref_2)=\{1,2,3\}$ between voter~$v_1$ at $4$ and voter~$v_2$ at $0$.

  In iteration~1, alternative~$4$ is embedded to the right of voter~$v_1$,
  as it is the first not-yet-embedded alternative in the preferences of $v_1$
  and there is an embedded alternative, namely~$2$, such that $v_1$ prefers $4$ to $2$.

  After alternative~$4$ has been embedded, alternatives~$5$ and $6$ are embedded to
  the left of the left-most alternative, namely $3$.
  This is because $v_1$ prefers each embedded alternative to each not-yet-embedded alternative (\emph{i.e.} Refine($\succ$, $v_1$) would return false).
  Alternatives~$5$ and $6$ are the first not-yet-embedded alternatives in the preference order of $v_2$,
  and there is an embedded alternative, namely~$4$, such that $v_2$ prefers $\{5,6\}$ to $4$.

  In iteration 2, that is, after alternatives~$5$ and $6$ have been embedded, 
  neither Refine($\succ_1$, $v_1$) nor Refine($\succ_2$, $v_2$) return true.
  Alternative~$7$ is the first not-yet-embedded alternative in the preference order of $v_1$.
  Thus, the Fallback() function embeds $7$ to the right of $v_1$ so that it becomes the right-most alternative. 

  Finally, in iteration~3, Refine($\succ_1$, $v_1$) returns false.
  Then, in Refine($\succ_2$, $v_2$), alternative~$8$ is embedded to the left of the left-most alternative, namely $6$, 
  as $8$ is the first not-yet-embedded alternative in the preferences of $v_2$
  and there is an embedded alternative, namely~$7$, such that $v_2$ prefers alternative~$8$ to $7$.
  
  The following table summarizes how the algorithm proceeds with the above profile as input.
  More precisely, row one shows the iteration in increasing order; row two shows which subprocedure in the specific iteration has embedded some alternatives (row three).
  For each alternative in row three, the embedding of this alternative is depicted in the last row. 
  \begin{tabular}{|l | p{.5cm}  p{.5cm}  p{.5cm} @{} p{2cm} @{}  p{1.5cm}  @{} p{1.5cm}  @{} p{2cm} @{} p{2cm}  |}
    \hline
    Iteration & \multicolumn{3}{c|}{0} &  \multicolumn{3}{c}{1} & \multicolumn{1}{|c|}{2} & \multicolumn{1}{c|}{3}  \\\hline
    &\multicolumn{3}{c|}{}&\multicolumn{1}{|c|}{}&&\multicolumn{1}{c|}{}&\multicolumn{1}{|c|}{}&\\[-2ex]
    \; Call &\multicolumn{3}{|c|}{Initialization} &  \multicolumn{1}{|c|}{Refine($\pref_1$, $v_1$)} &  \multicolumn{2}{|c|}{Refine($\pref_2$, $v_2$)} &  \multicolumn{1}{|c|}{Fallback()} &  \multicolumn{1}{|c}{Refine($\pref_2$, $v_2$)} \\[-2.6ex]
    &&&&\multicolumn{1}{|c|}{}&\multicolumn{2}{|c|}{}&\multicolumn{1}{|c|}{}&\\\hline
    &\multicolumn{1}{|c|}{}&\multicolumn{1}{|c|}{}&\multicolumn{1}{|c|}{}&\multicolumn{1}{|c|}{}&\multicolumn{1}{|c|}{}&&\multicolumn{1}{|c|}{}&\\[-2ex]
    Embedded alt.\ & \multicolumn{1}{|c|}{1} & \multicolumn{1}{|c|}{2} & \multicolumn{1}{|c|}{3} & \multicolumn{1}{|c|}{4} & \multicolumn{1}{|c|}{5} & \multicolumn{1}{c|}{6} & \multicolumn{1}{|c|}{7} & \multicolumn{1}{c|}{8} \\
    \hline
    &\multicolumn{1}{|c|}{}&\multicolumn{1}{|c|}{}&\multicolumn{1}{|c|}{}&\multicolumn{1}{|c|}{}&\multicolumn{1}{|c|}{}&&\multicolumn{1}{|c|}{}&\\[-2ex]
    Position & \multicolumn{1}{|c|}{$3$} & \multicolumn{1}{|c|}{$2$} & \multicolumn{1}{|c|}{$1$} & \multicolumn{1}{|c|}{$\frac{11}{2}$} & \multicolumn{1}{c|}{$-\frac{23}{6}$} & \multicolumn{1}{c|}{$-\frac{28}{6}$} & \multicolumn{1}{c|}{$\frac{41}{3}$} & \multicolumn{1}{c|}{$-\frac{115}{12}$} \\[-2.6ex]
    &\multicolumn{1}{|c|}{}&\multicolumn{1}{|c|}{}&\multicolumn{1}{|c|}{}&\multicolumn{1}{|c|}{}&\multicolumn{1}{|c|}{}&&\multicolumn{1}{|c|}{}&\\\hline

  \end{tabular}

\end{example}

\section{Single-peaked and single-crossing profiles with up to five alternatives}\label{sec:main-result-2}

In this section, we state and prove our second main result concerning preference profiles with up to five alternatives.

\begin{theorem}\label{thm:5-alts-sp+sc=euclid}
  Each preference profile with up to five alternatives is \ode if and only if it is single-peaked and single-crossing.
\end{theorem}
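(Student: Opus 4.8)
The ``only if'' direction is immediate: by \cref{obs:oder_is_sp}, every \ode profile is single-peaked and single-crossing, irrespective of the number of alternatives. All the content lies in the ``if'' direction, and my plan is to establish it by an exhaustive but finite computer-assisted check that is made tractable by three structural reductions.

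First I would bound and normalize the search space for each $m \le 5$. Since relabeling alternatives preserves all three properties, I may fix the single-peaked axis to be $1 \rhd 2 \rhd \cdots \rhd m$; there are then exactly $2^{m-1}$ preference orders that are single-peaked with respect to this axis, so at most $16$ for $m=5$. The single-crossing property caps the number of distinct voters at $\binom{m}{2}+1$, since each of the $\binom{m}{2}$ pairs crosses at most once along the single-crossing order, giving at most $11$ voters for $m=5$. Crucially, the \ode property is hereditary under voter deletion: restricting a \oder to a subset of the voters is still a \oder. Hence it suffices to verify the claim for the \emph{maximal} single-peaked and single-crossing profiles, i.e.\ the maximal single-crossing subsets of the $2^{m-1}$ single-peaked orders, as every single-peaked and single-crossing profile on $m$ alternatives is a voter-subprofile of one of these. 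Up to reversing the axis and reversing the single-crossing voter order, the number of maximal profiles to inspect is small.

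For each enumerated maximal profile I would reduce the existence of a \oder to a linear program. In any \oder of a single-peaked profile the left-to-right order of the alternatives on the line is itself a valid single-peaked axis, so I take the alternative order to be the fixed axis $1,2,\dots,m$ (trying each valid axis should the profile admit several). With the alternative order fixed, the observation following \cref{def:euclid} turns every voter constraint $a \pref_i b$ into the \emph{linear} inequality $E(v_i) < \tfrac12(E(a)+E(b))$ when $a$ lies left of $b$, and $E(v_i) > \tfrac12(E(a)+E(b))$ otherwise; imposing only consecutive pairs of each preference order suffices, since transitivity of $<$ recovers the full distance ranking. Collecting these together with the ordering constraints $E(c_{k+1})-E(c_k)\ge \delta$ yields a feasibility LP in the position variables and a margin~$\delta$; maximizing $\delta$ and testing $\delta>0$ decides the question and, when positive, produces an explicit \oder. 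I would run this LP through CPLEX for every maximal profile and confirm feasibility in all cases, recording the embeddings.

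The LP solving is routine; the real care goes into the enumeration. The main obstacle is guaranteeing \emph{completeness and correctness} of the reduction: I must argue rigorously that every single-peaked and single-crossing profile on $\le 5$ alternatives is a subprofile of some maximal profile the program inspects, that the symmetry reductions (alternative relabeling, axis reversal, voter-order reversal) never discard a genuinely distinct profile, and that fixing the alternative order to a single-peaked axis loses no \ode profile. Handling the strict inequalities via the positive margin $\delta$ is the only numerical subtlety, and it is standard.
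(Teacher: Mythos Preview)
Your proposal is correct and follows the same overall strategy as the paper: a finite computer-assisted enumeration of the relevant single-peaked and single-crossing profiles, each verified to be \ode by solving a linear feasibility program via CPLEX. Where you differ is in how the search space is organized. The paper normalizes by fixing one \emph{voter} to have the canonical order $1\succ\cdots\succ m$ and then enumerates every SP+SC profile of every size $n\le\binom{m}{2}+1$ containing that voter (producing the full census in \cref{tab:profiles-m=5}); you instead fix the single-peaked \emph{axis}, which caps the pool of admissible orders at $2^{m-1}$, and exploit hereditariness of the \ode property under voter deletion to check only the \emph{maximal} SP+SC profiles. Your reductions yield a smaller set of LP instances and make the LP formulation more explicit, whereas the paper's brute-force enumeration yields the exact profile counts as a by-product. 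Neither route needs an idea the other lacks.
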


\begin{proof}
  As already discussed in the introduction, a \ode profile necessarily is single-peaked and single-crossing.
  Thus, to show the theorem, it suffices to show that every single-peaked and single-crosssing preference profile with up to five alternatives is also \ode.
  We achieve this by using a computer program that exhaustively searches for all possible single-peaked and single-crossing profiles with up to five alternatives and provide a \ode embedding for each of them.
  We did some optimization to shrink our search space extensively.
  First of all, we only consider profiles with at least two alternatives and at least two voters who have pairwise \emph{distinct} preference orders as two voters with the same preference order can be embedded at the same position without losing the \ode property.
  Since the relevant profiles in consideration must be single-crossing, by \cite[Lemma 1]{DoiFal1994} and \cite[Section~2.1]{BreCheWoe2013a}, our program only searches for profiles with at most $\binom{m}{2}$ distinct preference orders, where $m$ is the number of alternatives, $3\le m\le 5$.
  The minimum number of voters we need to consider is three as by \cref{thm:two-votes-SP=Euclidean} all single-peaked and single-crossing preference profiles with two voters are \ode.
  
  Second, we assume that one of the preference orders in the sought profile is~$1\succ 2 \succ \ldots \succ m$.
  We denote this order as the canonical preference order.

  Third, using the monotonicity of the single-peaked property, we consider adding a preference order (there are $m!-1$ many) to form a potential relevant single-peaked and single-crossing profile only if it is single-peaked with the canonical one. 
   By \cite[Theorem~12(i)]{LackLack2017},  
  among all $m!-1$ preference orders other than the canonical one,
  there are $\binom{2m-2}{m-1}-1$ preference orders that are single-peaked with the canonical one.
  Note that for $m=5$, the number of potentially single-peaked profiles with $n=\binom{5}{2}+1=11$~voters would be reduced from $\binom{m!-1}{n}=\binom{119}{11}$ to $\binom{\binom{2m-2}{m-1}-1}{11}=\binom{69}{11}$.

  We summarize the number of single-peaked and single-crossing profile with up to $m=5$ alternatives and up to $n=\binom{m}{2}+1$ in \cref{tab:profiles-m=5}.
  Note that we include profiles which have two voters although by \cref{thm:Euclidean-relationship} all single-peaked and single-crossing preference profile with two voters are \ode.

  \begin{table}
    \centering
    \begin{tabular}{ | c | p{0.8cm} p{0.8cm} p{0.8cm} p{0.8cm} p{0.8cm} p{0.8cm} p{0.8cm} p{0.8cm} p{0.8cm} p{0.8cm}p{0.8cm}@{}|}
      \hline
      \diagbox{m}{n} & \multicolumn{1}{c}{2}  & \multicolumn{1}{c}{3} & \multicolumn{1}{c}{4} & \multicolumn{1}{c}{5} & \multicolumn{1}{c}{6} & \multicolumn{1}{c}{7} & \multicolumn{1}{c}{8} & \multicolumn{1}{c}{9} & \multicolumn{1}{c}{10} & \multicolumn{1}{c|}{11}\\
      \hline
      3 & \multicolumn{1}{c}{5}& \multicolumn{1}{c}{6} & \multicolumn{1}{c}{2} & \multicolumn{1}{c}{-} & \multicolumn{1}{c}{-} & \multicolumn{1}{c}{-} & \multicolumn{1}{c}{-} & \multicolumn{1}{c}{-} & \multicolumn{1}{c}{-} & \multicolumn{1}{c|}{-} \\
      4 & \multicolumn{1}{c}{19}& \multicolumn{1}{c}{69} & \multicolumn{1}{c}{108} & \multicolumn{1}{c}{90} & \multicolumn{1}{c}{39} & \multicolumn{1}{c}{7} & \multicolumn{1}{c}{-} & \multicolumn{1}{c}{-} & \multicolumn{1}{c}{-} & \multicolumn{1}{c|}{-}\\
      5 &   \multicolumn{1}{c}{69} &   \multicolumn{1}{c}{567} &   \multicolumn{1}{c}{2124} & \multicolumn{1}{c}{4810} & \multicolumn{1}{c}{7185} & \multicolumn{1}{c}{7273} & \multicolumn{1}{c}{4969} & \multicolumn{1}{c}{2196} & \multicolumn{1}{c}{570}  & \multicolumn{1}{c|}{66}\\
      \hline
    \end{tabular}
    \caption{For each number~$m$ of alternatives stated in the first column and for each number~$n$ of voters stated in the first row, $3 \le m \le 5$ and $2 \le n \le \binom{m}{2}+1$, we summarize the number of single-peaked and single-crossing preference profiles we have produced that contain the canonical preference order $1\succ 2\succ \cdots \succ m$ and no two voters that have the same preference orders. For instance, when $m=3$ and $n=4$, the number of sought preference profiles is $2$, as indicated in row two and column four.}\label{tab:profiles-m=5}
  \end{table}
  We implemented a program which, for each of these produced profiles, uses the IBM ILOG CPLEX optimization software package to check and find a \ode embedding. 
  The results can be found in \url{https://tubcloud.tu-berlin.de/s/ArdQzFd8J6L5YFN} and can be verified via the program given in~\url{https://tubcloud.tu-berlin.de/s/rSNKkm8dtPkRKnE}.
\end{proof}

\section{Conclusion and outlook}
We have shown that
for profiles with at most five alternatives or at most two voters,
being single-peaked and single-crossing suffices for being \ode.
Our research leads to some interesting follow up questions.
First of all, using our computer program from \cref{sec:main-result-2}
we can produce all single-peaked and single-crossing profiles and all \ode profiles.
A natural question is to count the number of structured (\emph{e.g.} single-peaked, single-crossing, \ode) preference profiles and provide a closed formula in terms of the number~$m$ of alternatives and the number~$n$ of voters,
in a similar spirit as recent work by \citeauthor{LackLack2017}~\cite{LackLack2017} and \citeauthor{CheFin2018}~\cite{CheFin2018}.

Second, both the single-peaked and the single-crossing property can be characterized by a few small forbidden subprofiles~\cite{BaHa2011,BreCheWoe2013a}.
However, this is not the case for the \ode property~\cite{ChePruWoe2017}.
Thus we ask: is it possible to characterize \emph{small} \ode preference profiles via a few forbidden subprofiles?
\citeauthor{Chen2016}~\cite[Chapter~4.11]{Chen2016} provided a generic construction and showed that there are at least $n!$ single-peaked and single-crossing profiles with $n=m/2$~voters and $m$~alternatives that are not \ode.
For $m=6$, this number would be $6$.
However, through our computer program we found that for $m=6$ and $n=3$, out of $4179$ single-peaked and single-crossing preference profiles,
there are $48$ ones which are \emph{not} \ode, which is more than $6$.

Last but not least, for $d\ge 2$, $d$-dimensional Euclidean profiles are not necessarily single-peaked nor single-crossing~\cite{BogLas2007}.
In other words, the forbidden subprofiles that are used to characterize single-peaked or single-crossing profiles are not of use to characterize $d$-dimensional Euclidean profiles.
This leads to the question of sufficient and necessary conditions for profiles to be $d$-dimensional Euclidean.
\citet{BogLas2007} answered this question for profiles that may contain ties.
\citet{BulChe2018} used a computer program to verify that all preference profiles with up to seven alternatives and up to three voters are $2$-dimensional Euclidean.

\section*{Acknowledgment}

We thank Laurent Bulteau (Laboratoire d'Informatique Gaspard Monge in Marne-la-Vall{\'e}e, France) for his insight and helpful comments on this work while Jiehua Chen was visiting him in March 2016; the visit was funded by Laboratoire d'Informatique Gaspard Monge in Marne-la-Vall{\'e}e, France.

\newcommand{\bibremark}[1]{\marginpar{\tiny\bf#1}}

\end{document}